\documentclass[preprint,12pt]{elsarticle}
\usepackage[T1]{fontenc}
\usepackage{mathptmx}
\usepackage[scaled=0.9]{helvet}
\usepackage{courier}
\usepackage{amsmath,amssymb,amsthm}
\usepackage{booktabs}
\usepackage{array}
\usepackage{xurl}
\usepackage{algorithm}
\usepackage{algpseudocode}
\usepackage{tikz}
\usetikzlibrary{shapes.geometric,arrows,positioning,calc}
\usepackage[hypertexnames=false,colorlinks=true,breaklinks=true,bookmarks=false,urlcolor=blue,citecolor=blue,linkcolor=blue,bookmarksopen=false,draft=false]{hyperref}

\makeatletter\renewcommand*{\top}{%
{\mathpalette\@transpose{}}%
}
\newcommand*{\@transpose}[2]{%
\raisebox{\depth}{$\m@th#1\mathsf{T}$}%
}
\makeatother

\newtheorem{theorem}{Theorem}
\newtheorem{lemma}[theorem]{Lemma}
\newtheorem{proposition}[theorem]{Proposition}
\newtheorem{corollary}[theorem]{Corollary}
\theoremstyle{remark}
\newtheorem{remark}[theorem]{Remark}

\DeclareMathOperator{\ldet}{ldet}
\newcommand{\R}{\mathbb{R}}
\newcommand{\Nn}{N_n}

\makeatletter\newcommand{\eadname}[1]{\global\@eadauthor={#1}}\makeatother

\journal{Discrete Applied Mathematics}

\begin{document}

\begin{frontmatter}

\title{Faster dynamic programming for\\ tridiagonal maximum-entropy sampling}


\author[ufrj]{Marcia Fampa}
\eadname{M. Fampa}\ead{fampa@cos.ufrj.br}

\author[umich]{Jon Lee}
\eadname{J. Lee}\ead{jonxlee@umich.edu}

\address[ufrj]{Federal University of Rio de Janeiro}

\address[umich]{University of Michigan}

\begin{abstract}
The maximum-entropy sampling problem (MESP) seeks, for an order-$n$ covariance
matrix $C$, a principal submatrix of order $s$ with maximum log-determinant.
Mostly for convenience, we assume that $C$ is nonsingular. Al-Thani and Lee (2023) solved MESP in $O(n^5)$ time when $C$ or
$C^{-1}$ is tridiagonal. We show that the inner maximization of their recursion
depends only on a prefix of the index set and that no piece of a solution is
longer than $s$; this gives an $O(ns^2)$-time algorithm that returns the optimal
value for every budget $t\le s$. When $C^{-1}$ is tridiagonal, $C$ is, up to
scaling, the covariance matrix of an Ornstein--Uhlenbeck process observed at
unevenly spaced times, and MESP becomes choosing points on a line under a concave
gap function with the Monge property; this gives an $O(ns)$-time algorithm and,
in the first-order autoregressive case, a closed-form solution. Given only $C$, we
solve MESP in $O(n^2)$ time whenever $C$ or $C^{-1}$ is tridiagonal, up to a
symmetric permutation, and we recognize these cases within the same bound. For
spiders, we make explicit, and sharpen, the dependence on the number of legs,
and, drawing on a hardness result of Ohsaka for stars, we observe that, unless
$\mathrm{P}=\mathrm{NP}$, the exponent of the running time must grow with the
number of legs.
\end{abstract}

\begin{keyword}
maximum-entropy sampling \sep dynamic programming \sep tridiagonal matrix \sep
determinant \sep Monge property \sep autoregressive process \sep spider \sep nonlinear combinatorial optimization
\end{keyword}

\end{frontmatter}

\section{Introduction}

Let $C$ be an order-$n$ symmetric positive-semidefinite matrix, the covariance
matrix of a Gaussian random vector $Y_{\Nn}$\,, where $\Nn:=\{1,\dots,n\}$, and let
$0<s<n$. We write $\ldet:=\log\det$, and we write $C[S,S]$ for the principal
submatrix of $C$ indexed by $S$. The \emph{maximum-entropy sampling problem}
(MESP)~\cite{SW} is
\[
  z(C,s):=\max\{\ldet C[S,S] : |S|=s,\ S\subset \Nn\}. \tag{MESP}
\]
The input to MESP is the matrix $C$ and the integer $s$. We exclude $s=n$ because
it is trivial: $z(C,n)=\ldet C$, which we compute in $O(n)$ time when $C$ or $C^{-1}$
is tridiagonal (by Lemma~\ref{lem:pivots}, or by the preprocessing of
Section~\ref{sec:precision}). We still use the notation $z(C,t)$ for every
$0\le t\le n$, with $z(C,0)=0$. 
We assume that $C$ is nonsingular.
When we exploit the
structure of $C^{-1}$, we never invert $C$ by a general-purpose method or form a
dense $C^{-1}$, and we account for the cost
of every quantity that we derive from $C$. We state running times in the
arithmetic model, in which each addition, subtraction, multiplication, division,
and comparison of real numbers costs $O(1)$ time; we do not use logarithms or
exponentials as primitive operations (see Section~\ref{sec:prelim}).
MESP is NP-hard~\cite{KLQ}; see~\cite{FL2022} for a comprehensive treatment.
\cite{ALTHANI2023120} (see also, \cite{ALTHANI2021127}) established the first
significant polynomially-solvable case of MESP: when $C$ (or $C^{-1}$, or a
symmetric permutation of either) is tridiagonal, they solve MESP by dynamic
programming in $O(n^5)$ time. Their algorithm is also a building block for
spider-structured matrices and for upper bounds that come from tridiagonal
masks (see \cite{ALTHANI2023120}).

\paragraph{Contribution}
We demonstrate that the recursion of \cite{ALTHANI2023120} performs
redundant work in two places. First, its inner maximization ranges over all
$O(n^2)$ earlier pieces, although it depends only on a prefix of $\Nn$ and on a
budget; therefore, it collapses to a single running maximum, computed once and reused. Second, no
piece is longer than $s$. The resulting algorithm (Algorithm~\ref{alg:new}) runs
in $O(ns^2)$ time and uses $O(ns)$ memory (Theorem~\ref{thm:main}).

In Section~\ref{sec:precision}, we treat the case in which $C^{-1}$ is
tridiagonal directly, rather than by complementation. Every such $C$ is, up to
scaling, the covariance matrix of an Ornstein--Uhlenbeck process observed at
unevenly spaced times --- a classical fact, in several equivalent forms (see
\cite{Barrett1979,Meurant1992,doob1942brownian}) --- and, from this viewpoint, MESP becomes the problem of choosing $s$ of those
times to maximize a sum of node weights and of a concave function of consecutive
gaps. This viewpoint gives a simple closed-form solution in the first-order
autoregressive case (Proposition~\ref{thm:ar1}), an $O(ns)$-time algorithm based on the
Monge property (Theorem~\ref{thm:precision}), and, together with
Theorem~\ref{thm:main}, an $O(n^2)$-time algorithm whenever $C$ or $C^{-1}$ is
tridiagonal (Corollary~\ref{cor:both}).

Two structures ($C$ tridiagonal or $C^{-1}$ tridiagonal) and two families of
methods (that of \cite{ALTHANI2023120} and ours) give four combinations.
Table~\ref{tab:preview} previews the running times and memory requirements; in
each structure, our best bound is $O(n^2)$, compared with $O(n^5)$.

\begin{table}[!htbp]
\centering
\small
\caption{Running time and memory, given $C$ as input. The rows are grouped by
the structure of $C$. Within each group, the column ``Applied to'' gives the matrix
on which the method is run, and ``Budget'' gives the corresponding budget; for
example, when $C$ is tridiagonal, Theorem~\ref{thm:precision} is run on $C^{-1}$,
whose inverse $C$ is tridiagonal, with budget $n-s$. Every bound includes the
$O(n)$ preprocessing that computes the needed entries of $C$ or $C^{-1}$
(Section~\ref{sec:precision}); if it is not known in advance whether $C^{-1}$ is
tridiagonal, deciding this costs $O(n^2)$ time (Proposition~\ref{prop:recognize}, and
Remark~\ref{rem:perm} for symmetric permutations). Both bounds of Corollary~\ref{cor:both} are $O(n^2)$, and both bounds of Al-Thani
and Lee are $O(n^5)$.}\label{tab:preview}
\medskip

\begin{tabular}{@{}lllll@{}}
\toprule
Method & Applied to & Budget & Time & Memory \\
\midrule
\multicolumn{5}{@{}l}{\emph{Structure: $C$ tridiagonal}} \\
Al-Thani and Lee~\cite{ALTHANI2023120} & $C$ & $s$ & $O(n^4s)$ & $O(n^2s)$ \\
Theorem~\ref{thm:main} & $C$ & $s$ & $O(ns^2)$ & $O(ns)$ \\
Theorem~\ref{thm:precision} & $C^{-1}$ & $n-s$ & $O(n(n-s))$ & $O(n(n-s))$ \\
Corollary~\ref{cor:both} & \multicolumn{2}{l}{better of the two} & $O(n\min\{s^2,n-s\})$ & \\
\midrule
\multicolumn{5}{@{}l}{\emph{Structure: $C^{-1}$ tridiagonal}} \\
Al-Thani and Lee~\cite{ALTHANI2023120} & $C^{-1}$ & $n-s$ & $O(n^4(n-s))$ & $O(n^2(n-s))$ \\
Theorem~\ref{thm:main} & $C^{-1}$ & $n-s$ & $O(n(n-s)^2)$ & $O(n(n-s))$ \\
Theorem~\ref{thm:precision} & $C$ & $s$ & $O(ns)$ & $O(ns)$ \\
Corollary~\ref{cor:both} & \multicolumn{2}{l}{better of the two} & $O(n\min\{s,(n-s)^2\})$ & \\
\bottomrule
\end{tabular}
\end{table}

In Section~\ref{sec:spiders}, we turn to spiders. We show that the legs interact
only through the body and, when the body is chosen, only through a single scalar
Schur complement. This observation yields an explicit running time that is linear
in $n$ and exponential only in the number $r$ of legs (Theorem~\ref{thm:spider}),
and it isolates the source of hardness: stars (spiders with $n-1$ legs) are
exactly the case in which that scalar coupling is unrestricted, and MESP is
NP-hard on stars~\cite{Ohsaka}. When the support graph $G(C^{-1})$ of $C^{-1}$ is a
spider instead, the same holds
with budget $n-s$, and we can recognize this case, and find the spider, in $O(n^2)$
time (\ref{app:spider}).

\section{Preliminaries}\label{sec:prelim}

We write $c_i:=C_{ii}$\,. The \emph{support graph} $G(A)$ of a symmetric matrix $A$
of order $n$ has vertex set $\Nn$ and an edge $\{i,j\}$ for each $i\ne j$ with
$A_{ij}\ne0$. In Sections~\ref{sec:prelim}--\ref{sec:faster}, $C$ is tridiagonal,
and we write $b_i:=C_{i,i+1}$\,. For integers $a$ and $i$, we write
$[a,i]:=\{a,\dots,i\}$, which is empty if $i<a$, and $L(a,i):=\ldet C[[a,i],[a,i]]$,
where we take the $\ldet$ of an empty matrix to be $0$.

Following \cite{ALTHANI2023120}, we decompose each $S\subset\Nn$
uniquely into maximal runs of consecutive integers, its \emph{pieces}
$S_1,\dots,S_m$\,, ordered from left to right. At least one index outside $S$
separates consecutive pieces. Because $C$ is tridiagonal, $C[S,S]$ is therefore
block diagonal, and we have the following result.

\begin{lemma}[\cite{ALTHANI2023120}]\label{lem:pieces}
$\ldet C[S,S]=\sum_{j=1}^{m}\ldet C[S_j,S_j]$.
\end{lemma}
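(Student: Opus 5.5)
The plan is to show that $C[S,S]$ is block diagonal with diagonal blocks $C[S_j,S_j]$, and then use multiplicativity of the determinant over block-diagonal matrices. First, I will order the elements of $S$ increasingly. Since the pieces $S_1,\dots,S_m$ are listed from left to right, this ordering lists all of $S_1$, then all of $S_2$, and so on. With this ordering, $C[S,S]$ is partitioned into blocks $C[S_j,S_k]$, and the diagonal blocks are exactly $C[S_j,S_j]$.

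Second, I will show that every off-diagonal block $C[S_j,S_k]$ with $j\ne k$ is zero. Take $p\in S_j$ and $q\in S_k$ with $j<k$. Because the pieces are maximal runs of consecutive integers, some index outside $S$ lies strictly between $\max S_j$ and $\min S_k$. Hence $q-p\ge \min S_k-\max S_j\ge 2$. Since $C$ is tridiagonal, $C_{pq}=0$ whenever $|p-q|\ge2$, so the block vanishes. The case $j>k$ follows by symmetry.

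Third, the determinant of a block-diagonal matrix is the product of the determinants of its diagonal blocks, so $\det C[S,S]=\prod_{j=1}^m \det C[S_j,S_j]$. To take logarithms, each factor must be positive. Here I use the standing assumption that $C$ is positive semidefinite and nonsingular, hence positive definite. Each principal submatrix $C[S_j,S_j]$ is then positive definite and has positive determinant. Taking $\ldet$ of the product gives the stated sum. If $S=\emptyset$, then $m=0$, and both sides are $0$ by the convention that the $\ldet$ of an empty matrix is $0$.

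There is no real obstacle here. The only points needing care are the separation fact $q-p\ge2$, which comes from maximality of the pieces, and the positivity of each block determinant, which is needed for $\ldet$ to be defined.
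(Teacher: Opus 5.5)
Your proposal is correct and follows the paper's own argument. The paper likewise notes that consecutive pieces are separated by at least one index outside $S$, so tridiagonality makes $C[S,S]$ block diagonal with blocks $C[S_j,S_j]$, and the log-determinant splits; you spell out the details it leaves implicit (the bound $q-p\ge2$, and positivity of each block determinant via positive definiteness).
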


We obtain interval log-determinants from the three-term determinant
recursion~\cite{GVL1996}, written as a ratio of consecutive leading principal
minors.

\begin{lemma}\label{lem:pivots}
Let $p(a,a):=c_a$\,, and, for $i>a$, let $p(a,i):=c_i-b_{i-1}^2/p(a,i-1)$. Then
$p(a,i)>0$ and $L(a,i)=L(a,i-1)+\log p(a,i)$.
\end{lemma}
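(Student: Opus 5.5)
We prove both claims together by induction on $i\ge a$, using the standard three-term recurrence for the leading principal minors of a tridiagonal matrix. Set $D(a,i):=\det C[[a,i],[a,i]]$, with $D(a,a-1):=1$. Expanding the determinant along its last row gives, for $i\ge a+1$,
\[
  D(a,i)=c_i\,D(a,i-1)-b_{i-1}^2\,D(a,i-2).
\]
For $i=a+1$ this reads $D(a,a+1)=c_a c_{a+1}-b_a^2$, which is correct.

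Since $C$ is positive semidefinite and nonsingular, it is positive definite. Hence every principal submatrix is positive definite, and $D(a,i)>0$ for all $a\le i$; this also holds for $D(a,a-1)=1$.

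The induction claim is $p(a,i)=D(a,i)/D(a,i-1)$.
\begin{itemize}
\item[] \emph{Base case.} For $i=a$ we have $p(a,a)=c_a=D(a,a)/D(a,a-1)$.
\item[] \emph{Inductive step.} Suppose $p(a,i-1)=D(a,i-1)/D(a,i-2)$, which is positive. Dividing the recurrence by $D(a,i-1)>0$ gives
\[
  \frac{D(a,i)}{D(a,i-1)}=c_i-b_{i-1}^2\,\frac{D(a,i-2)}{D(a,i-1)}=c_i-\frac{b_{i-1}^2}{p(a,i-1)}=p(a,i).
\]
\end{itemize}

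Positivity follows at once: $p(a,i)$ is a ratio of two positive minors. The recursion for $p$ therefore never divides by zero. Taking logarithms of $D(a,i)=p(a,i)\,D(a,i-1)$ gives $L(a,i)=L(a,i-1)+\log p(a,i)$. For $i=a$ this is consistent with the convention $L(a,a-1)=0$.

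The only step needing care is ensuring that $p(a,i-1)\neq 0$ at each stage. This is exactly why positive definiteness, and hence the nonsingularity assumption, is invoked before the division. Otherwise the argument is a routine cofactor expansion.
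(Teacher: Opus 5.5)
Your proof is correct and takes the same approach as the paper: it identifies $p(a,i)$ as the ratio $D(a,i)/D(a,i-1)$ of consecutive leading principal minors of $C[[a,i],[a,i]]$, and it gets positivity from positive definiteness. The only difference is that you prove this identification explicitly, by induction via the three-term recurrence, whereas the paper states it in one line by calling the $p(a,i)$ the Gaussian-elimination pivots.
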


\begin{proof}
The quantities $p(a,i)$ are the pivots of Gaussian elimination, that is, the
ratios $\det C[[a,i],[a,i]]/\det C[[a,i-1],[a,i-1]]$ of consecutive leading
principal minors of $C[[a,i],[a,i]]$. These minors are positive because $C$ is
positive definite.
\end{proof}

For $1\le a\le i\le n$ with $i-a+1\le s$, we obtain all of the values $L(a,i)$
that we need from Lemma~\ref{lem:pivots}, in $O(ns)$ total time.

\paragraph{Model of computation}
We write our recursions with logarithms because they turn products of
determinants into sums, but no algorithm in this paper needs to evaluate a
logarithm or an exponential. Each one only adds and compares sums of logarithms
of positive numbers, and, because $\log$ is increasing, it can instead multiply and
compare the corresponding products, with $0$ in place of $-\infty$ and with the
same number of operations; in particular, every comparison made by SMAWK in
Section~\ref{sec:precision} has the same outcome. Concretely, in
Section~\ref{sec:faster}, the dynamic program becomes a max-times recursion over
the determinants $D(a,i):=\det C[[a,i],[a,i]]$, which the recurrence
$D(a,i)=c_iD(a,i-1)-b_{i-1}^2D(a,i-2)$, with $D(a,a-1):=1$ and $D(a,a-2):=0$, gives
without division. In Section~\ref{sec:precision}, $a_j$ and
$\varphi(\tau_j-\tau_i)$ become $\sigma_j^2$ and $1-x_{ij}$\,, where
$x_{ij}:=e^{-(\tau_j-\tau_i)}=\prod_{k=i}^{j-1}\rho_k^2$ by Lemma~\ref{lem:ou}; from
the prefix products of the $\rho_k^2$, each $x_{ij}$ costs $O(1)$ time. In
Section~\ref{sec:spiders} and \ref{app:spider}, $w_i(e)$ is $\beta_i^2$ times a
ratio of two determinants, and $\ldet C$ becomes $\det C$, a product; in
Remark~\ref{rem:perm} and \ref{app:spider}, we compare the values $d(i,j)$ and
$|\rho_{ij}|$ by comparing the squares $\rho_{ij}^2=C_{ij}^2/(C_{ii}C_{jj})$. The
algorithms thus return an optimal $S$ and $\det C[S,S]=e^{z(C,s)}$. In
floating-point practice, logarithms avoid overflow and underflow, which is why we
keep them in the exposition and in Remark~\ref{rem:crossover}.

\section{The dynamic program of Al-Thani and Lee}

For $1\le k\le\ell\le n$ and $\ell-k+1\le t\le s$, Al-Thani and
Lee~\cite{ALTHANI2023120} define
\[
  f(k,\ell,t):=\max\{\ldet C[S,S] : |S|=t,\ S\subset\Nn,\ \text{the last piece of $S$ is }[k,\ell]\},
\]
with boundary values $f(k,\ell,\ell-k+1)=L(k,\ell)$ and, for $t>\ell-k+1$,
\begin{equation}\label{eq:old}
  \begin{split}
  f(k,\ell,t)=L(k,\ell)+\max_{i,j}\bigl\{&f(i,j,\,t-(\ell-k+1)) :\\
  &1\le i\le j\le k-2,\ j-i+1\le t-(\ell-k+1)\bigr\},
  \end{split}
\end{equation}
where the constraint $j\le k-2$ ensures that index $k-1$ is not chosen, so that
$[k,\ell]$ is a maximal piece. Finally, $z(C,s)=\max_{k,\ell}f(k,\ell,s)$.

\paragraph{Cost}
There are $O(n^2s)$ triples $(k,\ell,t)$, and evaluating \eqref{eq:old} requires
scanning $O(n^2)$ pairs $(i,j)$ for each of them, for a total of $O(n^4s)=O(n^5)$
time, as Al-Thani and Lee~\cite{ALTHANI2023120} state. Even if we restrict
attention to outer pieces with $\ell-k+1\le s$ (so that only $O(ns^2)$ triples
remain) but still scan all $O(n^2)$ pairs $(i,j)$, the cost is $\Theta(n^3s^2)$;
if we also restrict the inner pieces to $j-i+1\le s$ (so that only $O(ns)$ pairs
remain for each triple), the cost is still $\Theta(n^2s^3)$. In either case,
storing $f$ requires $\Theta(ns^2)$ memory.

\section{The faster dynamic program}\label{sec:faster}

\paragraph{Key observation}
The maximum in \eqref{eq:old} depends on $(k,\ell,t)$ only through the pair
$(k-1,t')$, where $t':=t-(\ell-k+1)$: it is the best value of a $t'$-subset of
$[1,k-2]$, that is, of a $t'$-subset of $[1,k-1]$ that omits $k-1$. It does not
depend on $\ell$, and as $k$ increases, the family of admissible pieces $(i,j)$
only grows. Hence, the maximum is a running (prefix) maximum: we can tabulate it
once per pair $(k,t')$, each entry in $O(1)$ time from earlier entries, instead of
recomputing it once per triple $(k,\ell,t)$ in $O(n^2)$ time. The outer
maximization still costs $O(s)$ time per state, so the total is $O(ns^2)$
(Theorem~\ref{thm:main}).

To make this precise, for $0\le i\le n$ and $0\le t\le s$, we define
\begin{align*}
  g(i,t) &:= \max\{\ldet C[A,A] : A\subset[1,i],\ |A|=t,\ i\notin A\},\\
  H(i,t) &:= \max\{\ldet C[A,A] : A\subset[1,i],\ |A|=t,\ i\in A\},
\end{align*}
where the condition $i\notin A$ is vacuous for $i=0$, and where
$\max\emptyset:=-\infty$. Thus, $g$ tabulates the inner maximum, and
$H(\ell,t)=\max_k f(k,\ell,t)$.

\begin{proposition}\label{prop:link}
For all $1\le k\le\ell\le n$ and $\ell-k+1\le t\le s$, we have
$f(k,\ell,t)=L(k,\ell)+g\bigl(k-1,\,t-(\ell-k+1)\bigr)$.
\end{proposition}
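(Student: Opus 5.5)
We will prove the identity directly from the definition of $f$, by a bijection between feasible sets and not by induction. Fix $k,\ell,t$ as in the statement and put $t':=t-(\ell-k+1)\ge0$. The plan is to show that the sets $S$ feasible for $f(k,\ell,t)$ correspond exactly to the sets $A$ feasible for $g(k-1,t')$, via $S=A\cup[k,\ell]$ and $A=S\setminus[k,\ell]$, and that under this correspondence $\ldet C[S,S]=L(k,\ell)+\ldet C[A,A]$. Taking maxima on both sides then gives the claim, including the case in which both feasible families are empty and both sides equal $-\infty$.

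First, take $S$ with $|S|=t$ whose last piece is $[k,\ell]$, and set $A:=S\setminus[k,\ell]$. Then $|A|=t'$. Because $[k,\ell]$ is the last piece, $S$ contains no index larger than $\ell$. Because that piece is maximal, $k-1\notin S$ when $k\ge2$. Hence $A\subset[1,k-2]\subset[1,k-1]$ and $k-1\notin A$, so $A$ is feasible for $g(k-1,t')$. When $k=1$, we get $A=\emptyset$, which is feasible for $g(0,0)$, the condition $i\notin A$ being vacuous for $i=0$; in that case $t'=0$ necessarily. The pieces of $S$ are the pieces of $A$ followed by $[k,\ell]$, so Lemma~\ref{lem:pieces} gives $\ldet C[S,S]=\ldet C[A,A]+L(k,\ell)$.

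Conversely, take $A\subset[1,k-1]$ with $|A|=t'$ and $k-1\notin A$, and set $S:=A\cup[k,\ell]$. Since $A\subset[1,k-2]$, the union is disjoint and $|S|=t$. No element of $S$ exceeds $\ell$, and $k-1\notin S$, so $[k,\ell]$ is a maximal run of $S$ and is its last piece. Thus $S$ is feasible for $f(k,\ell,t)$, and Lemma~\ref{lem:pieces} gives the same decomposition of $\ldet C[S,S]$. The two maps are mutually inverse.

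The only delicate step is the boundary bookkeeping. We must check that the definitions agree at $t'=0$: there $g(k-1,0)=\ldet$ of the empty matrix, which is $0$, and this matches the boundary value $f(k,\ell,\ell-k+1)=L(k,\ell)$. We must also handle $k=1$, and the case $k=2$, where $[1,k-2]=\emptyset$ forces $t'=0$ for feasibility. In addition, the constraint $t\le s$ in the definition of $g$ must be compatible with the pair $(k-1,t')$, which holds because $t'\le t\le s$. Finally, where the recursion \eqref{eq:old} restricts to nonempty inner pieces, we will note that the $-\infty$ convention makes both sides agree whenever $t'>0$ but no admissible $A$ exists.
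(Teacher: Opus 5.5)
Your proof is correct and follows the paper's argument. The paper likewise observes that $S\setminus[k,\ell]$ ranges exactly over the $t'$-subsets of $[1,k-2]$ and splits $\ldet C[S,S]$ by Lemma~\ref{lem:pieces}, treating $t'=0$ separately; you spell out both directions of the bijection and the cases $k\le2$ and empty feasible families, and your closing remark about \eqref{eq:old} is unnecessary because you argue directly from the definition of $f$ as a maximum over sets.
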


\begin{proof}
Let $t':=t-(\ell-k+1)$. If $t'=0$, then $g(k-1,0)=0$ (take $A=\emptyset$), in
agreement with the boundary value. If $t'>0$, then the set $S\setminus[k,\ell]$
ranges exactly over the $t'$-subsets of $[1,k-2]$, whose best value is
$g(k-1,t')$ by definition. By Lemma~\ref{lem:pieces}, $\ldet C[S,S]$ splits
accordingly.
\end{proof}

\begin{theorem}\label{thm:main}
Let $g(0,0):=0$, $g(0,t):=-\infty$ for $t>0$, and $H(0,t):=-\infty$ for $t\ge0$.
Then, for $1\le i\le n$ and $0\le t\le s$,
\begin{align}
  g(i,t) &= \max\{\,g(i-1,t),\ H(i-1,t)\,\},\label{eq:g}\\
  H(i,t) &= \max_{1\le \ell\le\min(t,i)}\bigl\{\,g(i-\ell,\ t-\ell)+L(i-\ell+1,\ i)\,\bigr\},\label{eq:H}
\end{align}
and $z(C,s)=\max\{g(n,s),H(n,s)\}$. Moreover, we can compute these tables in
$O(ns^2)$ time and $O(ns)$ memory, and they yield $z(C,t)$ for every $t\le s$.
\end{theorem}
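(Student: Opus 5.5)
We argue directly from the definitions of $g$ and $H$, by splitting on the status of index $i$ and of the last piece. For \eqref{eq:g}, take $i\ge1$ and a feasible $A$ for $g(i,t)$, that is, $A\subset[1,i-1]$ with $|A|=t$. Either $i-1\notin A$ (or $i-1=0$), in which case $A$ is feasible for $g(i-1,t)$, or $i-1\in A$, in which case it is feasible for $H(i-1,t)$. Conversely, every set feasible for $g(i-1,t)$ or for $H(i-1,t)$ lies in $[1,i-1]$, so it is feasible for $g(i,t)$. Taking maxima gives \eqref{eq:g}. The boundary case $i=1$ checks directly: $g(1,0)=0$, $g(1,t)=-\infty$ for $t>0$, and $H(0,\cdot)=-\infty$.

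For \eqref{eq:H}, let $A$ be feasible for $H(i,t)$. Since $i\in A$, the last piece of $A$ is $[i-\ell+1,i]$ for a unique $\ell$ with $1\le\ell\le\min(t,i)$. Maximality of that piece means $i-\ell\notin A$ when $i-\ell\ge1$. Hence $A':=A\setminus[i-\ell+1,i]$ is a $(t-\ell)$-subset of $[1,i-\ell]$ that omits $i-\ell$, which is exactly the feasibility condition for $g(i-\ell,t-\ell)$; the vacuous case $i-\ell=0$ is covered by the convention. By Lemma~\ref{lem:pieces}, $\ldet C[A,A]=\ldet C[A',A']+L(i-\ell+1,i)$.

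The converse direction is the step that needs the most care. Given any $A'$ feasible for $g(i-\ell,t-\ell)$, the set $A'\cup[i-\ell+1,i]$ has last piece exactly $[i-\ell+1,i]$, because $i-\ell\notin A'$. Therefore Lemma~\ref{lem:pieces} applies and gives the same additive split. Maximizing over $\ell$ and $A'$ yields \eqref{eq:H}, with $-\infty$ propagating correctly when $t-\ell>i-\ell$ makes the set empty. This is essentially the content of Proposition~\ref{prop:link}, with $H(\ell,t)=\max_k f(k,\ell,t)$.

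For $z(C,t)$ with $t\le s$, the optimal $t$-subset $S\subset[1,n]$ either omits $n$ or contains it. So $z(C,t)=\max\{g(n,t),H(n,t)\}$, which for $t=s$ is the claimed formula.

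For complexity, we first compute all needed $L(a,i)$ with $i-a+1\le s$ in $O(ns)$ time by Lemma~\ref{lem:pivots}. Since $\ell\le t\le s$ in \eqref{eq:H}, these are the only interval values required. We then fill the tables for $i=1,\dots,n$ and $t=0,\dots,s$ in increasing $i$. Each entry of \eqref{eq:g} costs $O(1)$ time. Each entry of \eqref{eq:H} costs $O(s)$ time and refers only to rows $i-\ell<i$. This gives $O(ns^2)$ time in total. The memory is $O(ns)$ for $g$, $H$, and the $L$ values.

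An optimal set is recovered by storing the arg-max choices (an $O(ns)$ table) and backtracking. If only values are needed, a sliding window of the last $s$ rows suffices.
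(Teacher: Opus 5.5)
Your proof is correct and follows essentially the same route as the paper. It uses the split on whether $i-1\in A$ for \eqref{eq:g}, the unique last piece $[i-\ell+1,i]$ together with Lemma~\ref{lem:pieces} in both directions for \eqref{eq:H}, the split on whether $n\in S$ for $z(C,t)$, and the same accounting of $O(ns)$ entries at $O(s)$ each, with argmax pointers for traceback. The paper is only more explicit about the optional reduced-memory variant (keeping the last $s$ rows of $g$ and regenerating the values $L(i-\ell+1,i)$ by a backward pivot recursion), which the statement does not require.
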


\begin{proof}
\emph{Correctness of \eqref{eq:g}.} A set $A\subset[1,i]$ with $i\notin A$ is a
subset of $[1,i-1]$ that either omits or contains $i-1$.

\emph{Correctness of \eqref{eq:H}.} If $i\in A$, then $A$ has a unique last piece
$[i-\ell+1,i]$ with $1\le\ell\le\min(t,i)$, and $A':=A\setminus[i-\ell+1,i]$ is a
$(t-\ell)$-subset of $[1,i-\ell]$ that omits $i-\ell$. By
Lemma~\ref{lem:pieces}, $\ldet C[A,A]=\ldet C[A',A']+L(i-\ell+1,i)$, and
maximizing over $A'$ gives $g(i-\ell,t-\ell)$. Conversely, every term in
\eqref{eq:H} is the value of a feasible $A$. (Equivalently, \eqref{eq:H} is the
maximum over $k$ of the identity in Proposition~\ref{prop:link}.) Every
$s$-subset of $\Nn$ either omits or contains $n$; this gives the formula for
$z(C,s)$, and the same argument for $t<s$ gives $z(C,t)$.

\emph{Complexity.} By Lemma~\ref{lem:pivots}, the $O(ns)$ values $L(a,i)$ with
$i-a+1\le s$ cost $O(ns)$ time. Each of the $O(ns)$ entries $g(i,t)$ costs $O(1)$
time, and each of the $O(ns)$ entries $H(i,t)$ requires scanning at most $t\le s$
lengths, for $O(ns^2)$ time in total. We store only $L$, $g$, $H$, and two $O(ns)$
tables of argmax pointers (the maximizing $\ell$ in \eqref{eq:H} and the
maximizing branch in \eqref{eq:g}); from these pointers, we recover an optimal
$S$ by traceback in $O(n)$ time. If only the optimal values are needed, the
working memory beyond the $O(n)$ input drops to $O(s^2)$: because $H(i,\cdot)$
reads $g(i-\ell,\cdot)$ only for $\ell\le s$, we keep just the most recent $s$
rows of $g$ and the most recent row of $H$, and we generate the values
$L(i-\ell+1,i)$ for $\ell=1,\dots,\min(s,i)$ afresh for each $i$, in $O(s)$ time,
by running the pivot recursion of Lemma~\ref{lem:pivots} backward from $i$ (that
is, on $C$ with its indices reversed).
\end{proof}

\begin{algorithm}[!htbp]
\caption{$O(ns^2)$ dynamic program for tridiagonal MESP}\label{alg:new}
\begin{algorithmic}[1]
\Require $c\in\R^n$ and $b\in\R^{n-1}$, defining a tridiagonal
  positive-definite matrix $C$; an integer $0<s<n$
\For{$a=1,\dots,n$} \Comment{interval log-determinants (Lemma~\ref{lem:pivots})}
  \For{$i=a,\dots,\min(n,a+s-1)$}
    \State $p\gets c_a$ if $i=a$, else $p\gets c_i-b_{i-1}^2/p$
    \State $L(a,i)\gets L(a,i-1)+\log p$, where $L(a,a-1):=0$
  \EndFor
\EndFor
\State $g(0,0)\gets 0$;\ $g(0,t)\gets-\infty$ for $t>0$;\ $H(0,t)\gets-\infty$ for $t\ge0$
\For{$i=1,\dots,n$}
  \For{$t=0,\dots,s$}
    \State $g(i,t)\gets\max\{g(i-1,t),H(i-1,t)\}$
    \State $H(i,t)\gets\max_{1\le\ell\le\min(t,i)}\{g(i-\ell,t-\ell)+L(i-\ell+1,i)\}$
  \EndFor
\EndFor
\State \Return $z(C,s)=\max\{g(n,s),H(n,s)\}$ and, by traceback, an optimal $S$
\end{algorithmic}
\end{algorithm}

\section{Tridiagonal precision matrices}\label{sec:precision}

Throughout this section, $M$ is an order-$n$ positive-definite matrix whose inverse
$Q:=M^{-1}$ is tridiagonal. The input remains $C$. When $C^{-1}$ is tridiagonal, we
apply the results of this section with $M:=C$; when $C$ is tridiagonal, we
apply them with $M:=C^{-1}$ and budget $n-s$, through the identity
$z(C,s)=\ldet C+z(C^{-1},n-s)$. In both cases, the algorithms use only $O(n)$
entries of $M$, and we compute these entries from $C$ in $O(n)$ time without
forming $C^{-1}$ (see the paragraph before Corollary~\ref{cor:both}). If $Q_{k,k+1}=0$ for some $k$, then $M$ is
block diagonal, the inverse of each block is tridiagonal, and we treat the blocks
separately and combine them by the knapsack-type recursion of
Remark~\ref{rem:budgets}. We therefore assume that $Q_{k,k+1}\ne0$ for
$1\le k<n$.

In this section, $Y$ denotes a Gaussian random vector with covariance matrix $M$
(so it is the vector of the introduction when $M=C$). We write
$\sigma_i^2:=M_{ii}$ and $\rho_k:=M_{k,k+1}/(\sigma_k\sigma_{k+1})$ for the correlation
of consecutive variables.

The following representation is classical, in several equivalent forms (see
\cite{Barrett1979,Meurant1992,doob1942brownian}); we state it in the form that we
use, and we include a short proof for the convenience of the reader.

\begin{lemma}[Ornstein--Uhlenbeck representation]\label{lem:ou}
Suppose, as above, that $Q_{k,k+1}\ne0$ for $1\le k<n$. Then, for $1\le k<n$, we
have $0<|\rho_k|<1$, and, for $i<j$, the correlation of $Y_i$ and
$Y_j$ is $\prod_{k=i}^{j-1}\rho_k$\,. Consequently, if we let $\tau_1:=0$ and
$\tau_{k+1}:=\tau_k-\log \rho_k^2$ for $1\le k<n$, then
$|\mathrm{corr}(Y_i,Y_j)|=\exp\bigl(-(\tau_j-\tau_i)/2\bigr)$ for all $i<j$.
\end{lemma}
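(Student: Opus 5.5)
We will prove the statement from the standard fact that a tridiagonal precision matrix encodes a Gaussian Markov chain: for $i<k<j$, $Y_i$ and $Y_j$ are conditionally independent given $Y_k$. Concretely, the conditional distribution of $Y_j$ given $(Y_1,\dots,Y_{j-1})$ depends only on $Y_{j-1}$. This holds because the regression coefficients of $Y_j$ on $Y_{[1,j-1]}$ are given by $-Q'_{jm}/Q'_{jj}$, where $Q'$ is the inverse of the leading $j\times j$ block of $M$. That leading block has inverse equal to a Schur complement of $Q$, namely $Q[[1,j],[1,j]]-Q[[1,j],[j+1,n]]\,Q[[j+1,n],[j+1,n]]^{-1}Q[[j+1,n],[1,j]]$. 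The correction term is supported only in position $(j,j)$, because the only nonzero entry of $Q[[1,j],[j+1,n]]$ is $Q_{j,j+1}$. So $Q'$ is again tridiagonal, and its off-diagonal entries coincide with those of $Q$.

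\emph{Step 1: the product formula.} Work with the standardized variables $X_i:=Y_i/\sigma_i$. By the Markov property just established, the best linear predictor of $X_j$ from $X_{[1,j-1]}$ is $\rho_{j-1}X_{j-1}$. Hence $X_j=\rho_{j-1}X_{j-1}+\varepsilon_j$, where $\varepsilon_j$ is uncorrelated with $X_1,\dots,X_{j-1}$. Fix $i<j$, take covariance with $X_i$, and induct on $j$:
\[
\mathrm{corr}(Y_i,Y_j)=\rho_{j-1}\,\mathrm{corr}(Y_i,Y_{j-1})=\prod_{k=i}^{j-1}\rho_k\,.
\]

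\emph{Step 2: $0<|\rho_k|<1$.} The upper bound follows from positive definiteness: the $2\times2$ principal submatrix on $\{k,k+1\}$ has positive determinant, which gives $\rho_k^2<1$. For the lower bound, suppose $\rho_k=0$. By Step 1, every correlation between an index $\le k$ and an index $\ge k+1$ vanishes, since each such product contains the factor $\rho_k$. Then $M$ is block diagonal with blocks $[1,k]$ and $[k+1,n]$, so $Q=M^{-1}$ is block diagonal too, and $Q_{k,k+1}=0$. This contradicts the standing assumption. Alternatively, one can read off from the Schur-complement computation that $\rho_k$ is a nonzero multiple of $Q_{k,k+1}$.

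\emph{Step 3: the exponential form.} Since $0<\rho_k^2<1$, each increment $-\log\rho_k^2$ is positive and finite, so the $\tau_k$ are well defined and strictly increasing. Telescoping gives $\tau_j-\tau_i=-\sum_{k=i}^{j-1}\log\rho_k^2$. Therefore
\[
\exp\bigl(-(\tau_j-\tau_i)/2\bigr)=\prod_{k=i}^{j-1}|\rho_k|=|\mathrm{corr}(Y_i,Y_j)|\,,
\]
by Step 1.

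The main obstacle is the Markov step: showing cleanly that the leading blocks of $M$ have tridiagonal inverses with the same off-diagonal pattern, so that the regression of $Y_j$ on its past involves only $Y_{j-1}$. As a fallback, one can argue directly from the linear system $M\,Q e_j=e_j$. Its rows $m\le j-2$ give three-term relations showing that $(M_{m,j})_m$ satisfies the same recurrence as $(M_{m,j-1})_m$. Using the nonvanishing $Q_{k,k+1}$ to divide, this yields $M_{ij}M_{kk}=M_{ik}M_{kj}$ for $i<k<j$. That rank-one identity is exactly the multiplicativity of correlations used in Step 1.
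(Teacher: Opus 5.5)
Your proof is correct and has the same skeleton as the paper's proof. A Markov property gives multiplicativity of correlations along the path. Positive definiteness of the $2\times2$ block gives $|\rho_k|<1$. If $\rho_k=0$, then $M$, and hence $Q$, would be block diagonal, contradicting $Q_{k,k+1}\ne0$.

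The difference is in how the Markov step is justified. The paper cites the global Markov property of Gaussian graphical models: vertex $k$ separates $i$ from $j$ in the path $G(Q)$, so $Y_i$ and $Y_j$ are conditionally independent given $Y_k$. That gives $M_{ij}=M_{ik}M_{kj}/M_{kk}$ for every $i<k<j$, and the paper inducts on $j-i$. You instead prove the one-sided Markov property you need from scratch. The inverse of $M[[1,j],[1,j]]$ is the Schur complement of $Q[[j+1,n],[j+1,n]]$ in $Q$, and it differs from $Q[[1,j],[1,j]]$ only in entry $(j,j)$. Hence the regression of $Y_j$ on its past involves only $Y_{j-1}$. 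The innovation form $X_j=\rho_{j-1}X_{j-1}+\varepsilon_j$ then gives the product formula by induction on $j$.

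What each route buys:
\begin{itemize}
\item The paper's route is a one-line appeal to a standard theorem, and it reuses that same theorem for trees in Proposition~\ref{prop:unknown}, but it depends on an external result.
\item Your route is self-contained linear algebra. As a byproduct it shows that $\rho_k$ is a nonzero multiple of $Q_{k,k+1}$, so $|\rho_k|>0$ can be read off directly instead of through the block-diagonal contradiction.
\end{itemize}
Your innovation representation is essentially the converse construction in the proof of Proposition~\ref{prop:recognize}. The identity your fallback targets is exactly its condition~(ii).

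One small correction to the fallback: the three-term relations in the row index $m$ come from $Q\,(Me_j)=e_j$, that is, rows of $Q$ applied to column $j$ of $M$, not from the rows of $MQ$. Rows $m\le j-2$ of this system and of $Q\,(Me_{j-1})=e_{j-1}$ are the same homogeneous relations. Because $Q_{m,m+1}\ne0$, these relations have a one-dimensional solution space. That makes $(M_{m,j})_{m<j}$ proportional to $(M_{m,j-1})_{m<j}$, which is the identity with $k=j-1$.
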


\begin{proof}
For $i<k<j$, the vertex $k$ separates $i$ from $j$ in $G(Q)$, a path, so, by the
global Markov property of Gaussian graphical models~\cite[Corollary~6.4, Proposition~6.6, and Theorem~4.6]{Lauritzen2026}, $Y_i$ and
$Y_j$ are conditionally independent given $Y_k$\,. For Gaussian
variables, this conditional independence means that
$M_{ij}-M_{ik}M_{kj}/M_{kk}$ vanishes, that is,
$\mathrm{corr}(Y_i,Y_j)=\mathrm{corr}(Y_i,Y_k)\,\mathrm{corr}(Y_k,Y_j)$, and induction on
$j-i$ gives the product formula. Positive definiteness of $M[\{k,k+1\},\{k,k+1\}]$
gives $|\rho_k|<1$. If $\rho_k=0$, then the product formula gives $M_{ij}=0$ for all
$i\le k<j$; then $M$, and hence $Q$, is block diagonal, contrary to
$Q_{k,k+1}\ne0$.
\end{proof}

Therefore, up to the signs and scales of its components, the vector $Y$ is a
stationary Ornstein--Uhlenbeck process (see \cite{doob1942brownian}), observed at the times $\tau_1<\dots<\tau_n$
(in general, unevenly spaced). For $x>0$, let $\varphi(x):=\log(1-e^{-x})$. Because
$\varphi'(x)=1/(e^x-1)$, the function $\varphi$ is strictly increasing and strictly
concave on $(0,\infty)$.

The next result follows from Lemma~\ref{lem:ou} and the chain rule for Gaussian
entropy (see \cite{CoverThomas}); cf.\ \cite{Meurant1992} for determinants of
submatrices of matrices with tridiagonal inverse.

\begin{lemma}\label{lem:gaps}
For $S=\{i_1<\dots<i_t\}\subset\Nn$\,, we have
\[
  \ldet M[S,S]=\sum_{i\in S}\log\sigma_i^2+\sum_{k=1}^{t-1}\varphi\bigl(\tau_{i_{k+1}}-\tau_{i_k}\bigr).
\]
\end{lemma}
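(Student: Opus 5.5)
Write $Y_S:=(Y_{i_1},\dots,Y_{i_t})$, which is Gaussian with covariance $M[S,S]$. I will compute $\ldet M[S,S]$ by the chain rule for Gaussian entropy, that is, by factoring the determinant into successive conditional variances:
\[
  \det M[S,S]=\operatorname{Var}(Y_{i_1})\prod_{k=1}^{t-1}\operatorname{Var}\bigl(Y_{i_{k+1}}\mid Y_{i_1},\dots,Y_{i_k}\bigr).
\]
This is the Schur-complement (Cholesky pivot) factorization of $M[S,S]$ in the order $i_1<\dots<i_t$, and it holds because $M$ is positive definite. The first factor is $\sigma_{i_1}^2$. So it remains to show that, for each $k$,
\[
  \operatorname{Var}\bigl(Y_{i_{k+1}}\mid Y_{i_1},\dots,Y_{i_k}\bigr)=\sigma_{i_{k+1}}^2\bigl(1-e^{-(\tau_{i_{k+1}}-\tau_{i_k})}\bigr).
\]
Taking logarithms and summing then gives the formula, since $\log\bigl(1-e^{-x}\bigr)=\varphi(x)$.

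\textbf{Reducing to one conditioning variable.} I will argue that conditioning on $Y_{i_1},\dots,Y_{i_k}$ is the same as conditioning on $Y_{i_k}$ alone, which is the Markov property along the path. As in the proof of Lemma~\ref{lem:ou}, the vertex $i_k$ separates $\{i_1,\dots,i_{k-1}\}$ from $i_{k+1}$ in the path $G(Q)$. By the global Markov property, $Y_{i_{k+1}}$ is conditionally independent of $(Y_{i_1},\dots,Y_{i_{k-1}})$ given $Y_{i_k}$. For Gaussian vectors this means exactly that the conditional variance given all of them equals the conditional variance given $Y_{i_k}$. If I want a purely algebraic check instead, I would use the product formula of Lemma~\ref{lem:ou}: for $j<i_k$ it gives $\operatorname{Cov}(Y_{i_{k+1}},Y_j)=\operatorname{Cov}(Y_{i_{k+1}},Y_{i_k})\,\operatorname{Cov}(Y_{i_k},Y_j)/\sigma_{i_k}^2$. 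Hence the residual $Y_{i_{k+1}}-\bigl(M_{i_k i_{k+1}}/\sigma_{i_k}^2\bigr)Y_{i_k}$ is uncorrelated with every $Y_j$, $j\le i_k$, and this residual is the best linear predictor error.

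\textbf{The one-variable conditional variance.} Conditioning on the single variable $Y_{i_k}$ gives
\[
  \operatorname{Var}\bigl(Y_{i_{k+1}}\mid Y_{i_k}\bigr)=\sigma_{i_{k+1}}^2\bigl(1-\operatorname{corr}(Y_{i_k},Y_{i_{k+1}})^2\bigr).
\]
By Lemma~\ref{lem:ou}, the squared correlation equals $\exp\bigl(-(\tau_{i_{k+1}}-\tau_{i_k})\bigr)$. The gap is positive, since the $\tau_k$ are strictly increasing because $|\rho_k|<1$. So $\varphi$ is evaluated inside its domain, and the factor $1-e^{-x}$ is positive.

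The main obstacle is the Markov reduction in the previous step. It is cleanest to cite the same graphical-model facts used in Lemma~\ref{lem:ou}. The algebraic residual check above avoids that machinery and is the route I would actually write out, because it only uses the product formula. The case $t=1$ is trivial, since then the sum over gaps is empty.
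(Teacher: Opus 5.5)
Your proof is correct and follows the paper's argument. It uses the chain rule for Gaussian entropy (successive conditional variances), reduces to conditioning on the preceding chosen variable $Y_{i_k}$ by the Markov property, and evaluates that one-variable conditional variance via Lemma~\ref{lem:ou}. Your residual check through the product formula is a self-contained justification of the Markov step, in place of citing the global Markov property, but the argument is otherwise unchanged.
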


\begin{proof}
By the chain rule for Gaussian entropy,
$\ldet M[S,S]=\sum_{k=1}^{t}\log\mathrm{Var}(Y_{i_k}\mid Y_{i_1},\dots,Y_{i_{k-1}})$. By
the Markov property, conditioning on $Y_{i_1},\dots,Y_{i_{k-1}}$ is equivalent to
conditioning on $Y_{i_{k-1}}$ alone, and
$\mathrm{Var}(Y_{i_k}\mid Y_{i_{k-1}})=\sigma_{i_k}^2\bigl(1-\mathrm{corr}(Y_{i_{k-1}},Y_{i_k})^2\bigr)
=\sigma_{i_k}^2\bigl(1-e^{-(\tau_{i_k}-\tau_{i_{k-1}})}\bigr)$ by Lemma~\ref{lem:ou}.
\end{proof}

From Lemma~\ref{lem:gaps}, MESP for $M$ is the problem of choosing $s$ of the
points $\tau_1,\dots,\tau_n$ to maximize a sum of node weights $\log\sigma_i^2$ and
of a concave, increasing function of the consecutive gaps. When the node weights
are equal and the points are evenly spaced, an elementary balancing argument gives
a closed form.

\begin{proposition}\label{thm:ar1}
Let $-1<\rho<1$, let $\kappa>0$, and let $M_{ij}:=\kappa\rho^{|i-j|}$, the
covariance matrix of a stationary first-order autoregressive process; its inverse
is $\bigl(\kappa(1-\rho^2)\bigr)^{-1}$ times the tridiagonal matrix with diagonal
$(1,1+\rho^2,\dots,1+\rho^2,1)$ and off-diagonal entries $-\rho$. Then
$z(M,1)=\log\kappa$ and, for $2\le s\le n$, with $q:=\lfloor(n-1)/(s-1)\rfloor$ and
$\theta:=(n-1)-(s-1)q$,
\[
  z(M,s)=s\log\kappa+\theta\log\bigl(1-\rho^{2(q+1)}\bigr)+(s-1-\theta)\log\bigl(1-\rho^{2q}\bigr).
\]
Moreover, if $\rho\ne0$ and $s\ge2$, then $S$ is optimal if and only if $1\in S$,
$n\in S$, and every gap between consecutive elements of $S$ equals $q$ or $q+1$.
\end{proposition}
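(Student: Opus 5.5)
We would derive the closed form from Lemma~\ref{lem:gaps}, specialized to the AR(1) case. The inverse formula is a routine check: multiply $M$ by the stated tridiagonal matrix and verify that the product is the identity, using $\rho^{|i-j|}$ row by row, with the end rows handled separately. The case $\rho=0$ is trivial: $M=\kappa I$, every $S$ gives $s\log\kappa$, and the formula agrees because its log terms vanish (when $q\ge1$ we have $\rho^{2q}=0$). So assume $\rho\ne0$. Then $Q_{k,k+1}=-\rho/(\kappa(1-\rho^2))\ne0$, $\sigma_i^2=\kappa$, and $\rho_k=\rho$. Hence $\tau_k=(k-1)\lambda$ with $\lambda:=-\log\rho^2>0$, and $\varphi(\tau_j-\tau_i)=\log(1-\rho^{2(j-i)})$. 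For $s=1$ the value is $\log\kappa$. For $s\ge2$, Lemma~\ref{lem:gaps} gives, for $S=\{i_1<\dots<i_s\}$ with integer gaps $d_k:=i_{k+1}-i_k\ge1$,
\[
\ldet M[S,S]=s\log\kappa+\sum_{k=1}^{s-1}\psi(d_k),\qquad \psi(d):=\log\bigl(1-\rho^{2d}\bigr)=\varphi(\lambda d).
\]

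Next we reduce to an integer allocation problem. The total gap is $D:=\sum_k d_k=i_s-i_1\le n-1$, with equality iff $1,n\in S$. Since $\varphi$ is strictly increasing, $\psi$ is strictly increasing in $d$. Given any $S$ with $D<n-1$, we can enlarge one gap (shift the tail of $S$ right, or the head left) and strictly increase the value. Hence every optimal $S$ has $i_1=1$ and $i_s=n$. The problem becomes: maximize $\sum_{k=1}^{s-1}\psi(d_k)$ over positive integers with $\sum d_k=n-1$.

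The main step is the balancing argument. Because $\varphi$ is strictly concave, $\psi$ restricted to integers is strictly discretely concave: $\psi(d+1)-\psi(d)$ is strictly decreasing in $d$. Suppose some two gaps satisfy $d_a\ge d_b+2$. Replacing them by $d_a-1$ and $d_b+1$ changes the sum by $[\psi(d_b+1)-\psi(d_b)]-[\psi(d_a)-\psi(d_a-1)]>0$, since $d_b<d_a-1$. So at an optimum all gaps differ by at most one.

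The balanced vector with sum $n-1$ is unique up to order: exactly $\theta$ gaps equal $q+1$ and $s-1-\theta$ equal $q$. Every such ordering has the same value, and one exists, so this value is the maximum. That gives the formula, and also the characterization: the conditions are necessary by the two strictness arguments, and sufficient because all balanced orderings tie. Note $q\ge1$ since $s\le n$.

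The only delicate points are the strict monotonicity and strict discrete concavity of $\psi$, which we expect to be the main obstacle only in bookkeeping. Both follow directly from the stated properties of $\varphi$ composed with the linear map $d\mapsto\lambda d$, which preserves strict concavity.
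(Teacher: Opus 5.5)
Your proposal is correct and follows essentially the same route as the paper. Both arguments reduce, via Lemma~\ref{lem:gaps} with $\tau_k=(k-1)\lambda$, to maximizing $\sum_k\psi(d_k)$ with $\psi(d)=\varphi(\lambda d)$, use strict monotonicity to force $1,n\in S$, and use a strict-concavity exchange to force all gaps into $\{q,q+1\}$. You also check explicitly that $Q_{k,k+1}\ne0$, which Lemma~\ref{lem:ou} requires and which the paper leaves implicit.
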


\begin{proof}
If $\rho=0$, then $M=\kappa I$, and the formula gives $s\log\kappa$ because $q\ge1$.
Let $\rho\ne0$. Then $\sigma_i^2=\kappa$ and $\rho_k=\rho$ for all $i$ and $k$, so
$\tau_k=(k-1)\lambda$ with $\lambda:=-\log\rho^2>0$. By Lemma~\ref{lem:gaps},
$\ldet M[S,S]=s\log\kappa+\sum_{k=1}^{s-1}\psi(d_k)$, where $d_k:=i_{k+1}-i_k$ and
$\psi(d):=\varphi(\lambda d)$ is strictly increasing and strictly concave. The gaps
are positive integers with $\sum_kd_k=i_s-i_1\le n-1$, and every such sequence of
gaps arises from some $S$. Because $\psi$ is strictly increasing, an optimal $S$
has $i_1=1$ and $i_s=n$, so that $\sum_kd_k=n-1$. If two gaps satisfy
$d_k\ge d_{k'}+2$, then replacing them by $d_k-1$ and $d_{k'}+1$ strictly increases
the objective, because strict concavity gives
$\psi(d_{k'}+1)-\psi(d_{k'})>\psi(d_k)-\psi(d_k-1)$. Hence, at an optimum, all gaps lie in
$\{q,q+1\}$, and summing to $n-1$ forces exactly $\theta$ gaps of $q+1$ and $s-1-\theta$ gaps
of $q$; every ordering of these gaps attains the same value.
\end{proof}

The argument above is elementary, and the formula may well be known; we have not
found it in the literature. Because only $\rho^2$ appears, the case $-1<\rho<0$ is
identical to the case $|\rho|$.
When $(s-1)$ divides $(n-1)$, the optimal design is the evenly spaced one, and
$z(M,s)=s\log\kappa+(s-1)\log\bigl(1-\rho^{2(n-1)/(s-1)}\bigr)$.

For general tridiagonal $Q$, the node weights and the spacings vary, and we use
dynamic programming. We write $a_i:=\log\sigma_i^2$ and
$w(i,j):=\varphi(\tau_j-\tau_i)$ for $i<j$.

\begin{theorem}\label{thm:precision}
For $1\le t\le s$ and $t\le j\le n$, let $F(j,t)$ be the greatest value of
$\ldet M[S,S]$ over sets $S$ with $|S|=t$ and $\max S=j$. Then $F(j,1)=a_j$ and,
for $t\ge2$,
\begin{equation}\label{eq:prec}
  F(j,t)=a_j+\max_{t-1\le i<j}\bigl\{F(i,t-1)+w(i,j)\bigr\},
\end{equation}
and $z(M,t)=\max_jF(j,t)$. Moreover, for $i<i'<j<j'$, the gap function satisfies
the Monge inequality  (see \cite{burkard1996perspectives})
\begin{equation}\label{eq:monge}
  w(i,j)+w(i',j')\ge w(i,j')+w(i',j),
\end{equation}
so we can compute $z(M,t)$ for every $t\le s$, and an optimal $S$, in $O(ns)$ time.
\end{theorem}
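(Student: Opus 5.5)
We argue in three steps: first the recursion \eqref{eq:prec}, then the Monge inequality \eqref{eq:monge}, and finally the $O(ns)$ bound via SMAWK.

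\emph{The recursion.} This follows directly from Lemma~\ref{lem:gaps}. Take $S=\{i_1<\dots<i_t\}$ with $i_t=j$. Write $i:=i_{t-1}$ and $S':=S\setminus\{j\}$. Lemma~\ref{lem:gaps} gives
\[
\ldet M[S,S]=\ldet M[S',S']+a_j+w(i,j).
\]
Here $S'$ is a $(t-1)$-set with $\max S'=i$, so $i\ge t-1$. Conversely, every such $S'$ with $i<j$ yields a feasible $S$. Maximizing over $S'$ for fixed $i$ gives $F(i,t-1)$, and maximizing over $i$ gives \eqref{eq:prec}. The base case $F(j,1)=a_j$ is immediate. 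The identity $z(M,t)=\max_jF(j,t)$ holds because every $t$-set has some maximum $j$.

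\emph{The Monge inequality.} Set $x:=\tau_{i'}-\tau_i>0$, $y:=\tau_j-\tau_{i'}>0$, and $u:=\tau_{j'}-\tau_j>0$. The $\tau$'s are strictly increasing because $0<|\rho_k|<1$ by Lemma~\ref{lem:ou}. Then \eqref{eq:monge} reads
\[
\varphi(x+y)+\varphi(y+u)\ge\varphi(x+y+u)+\varphi(y).
\]
This is the standard consequence of concavity. The four arguments satisfy $y\le x+y,\ y+u\le x+y+u$, and the two pairs have equal sums. Equivalently, $\varphi(y+u)-\varphi(y)\ge\varphi(x+y+u)-\varphi(x+y)$, since the increment of a concave function over an interval of fixed length $u$ is nonincreasing in the left endpoint. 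We could instead write the difference as $\int_0^x\!\int_0^u -\varphi''\ge 0$; the sign is correct since $\varphi''<0$.

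\emph{Complexity.} For each $t=2,\dots,s$, define the matrix $A_t(j,i):=F(i,t-1)+w(i,j)$ for $i<j$, and set $A_t(j,i):=-\infty$ for $i\ge j$ or $i<t-1$. Computing $F(\cdot,t)$ amounts to finding a row maximum of $A_t$ for every $j$. Adding the column-dependent term $F(i,t-1)$ preserves the Monge inequality, so on the region $i<j$ the matrix $A_t$ is (inverse) Monge, and its row-maxima positions are monotone in $j$. Here the main obstacle is the $-\infty$ entries: they form a staircase pattern and are not a constant pad. I would handle them the standard way by replacing $-\infty$ with sufficiently negative finite values that keep the Monge (or totally monotone) property. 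Concretely, I would set the entry to $-K\cdot(i-j+1)$ for $i\ge j$, with $K$ large. Alternatively, I would argue total monotonicity directly on the lower-triangular staircase: for $j<j'$, if $i<i'$ and column $i'$ beats column $i$ in row $j$, then it also does so in row $j'$. This holds by \eqref{eq:monge} when $i'<j$, and vacuously otherwise.

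With total monotonicity in place, SMAWK computes each layer in $O(n)$ time. Each entry is evaluated in $O(1)$ using the $\tau$'s and the $a$'s, which are precomputed in $O(n)$ time; in the model of computation of Section~\ref{sec:prelim}, this is done via the products $x_{ij}$. The total over all layers is $O(ns)$. Storing the argmax per layer uses $O(ns)$ memory and allows traceback of an optimal $S$. The recursion yields $z(M,t)$ for every $t\le s$ as a byproduct.
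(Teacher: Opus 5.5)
Your recursion and your proof of \eqref{eq:monge} match the paper's; its ``more spread-out pair with the same sum'' is your increment argument. The routes differ in how the staircase of undefined entries is handled.

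The paper pads with $-K(i-j+1)^2$, where $K\ge3B+1$. It checks in six cases that the padded matrix satisfies the full Monge inequality, and deduces total monotonicity from that. You go straight for total monotonicity, which is all SMAWK uses.

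Drop the word ``Monge'' for your linear pad $-K(i-j+1)$: that matrix is not Monge in general. Take $j\le i$ and $j'=i+1\le i'$. Then the difference $A(j,i)+A(j',i')-A(j,i')-A(j',i)$ equals $-A(j',i)=-F(i,t-1)-w(i,i+1)$. This is negative once $M$ is scaled by a large constant, and it is exactly the configuration that makes the paper square the pad.

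Your linear pad is, however, totally monotone, and your ``vacuously otherwise'' sentence is essentially the proof.
\begin{itemize}
\item If $i'<j$, all four entries are defined, the $F$ terms cancel, and \eqref{eq:monge} remains.
\item If $i'\ge j$, then $A(j,i')<A(j,i)$ strictly. Padded entries lie below every defined one when $K>B$, and they strictly decrease in $i$. So column $i'$ does not even weakly beat column $i$ in row $j$.
\end{itemize}
Hence both the strict and the weak forms of total monotonicity hold, and SMAWK works with either tie-breaking rule. With $-\infty$ entries instead, ties among undefined entries leave only the strict form, so you would need SMAWK with leftmost maxima.

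As in the paper, you never need to know $K$. Compare entries symbolically: defined entries rank above padded ones, and padded entries are ordered by $-(i-j+1)$. You should say this explicitly, since otherwise choosing $K>B$ requires bounding all defined entries.

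Your route replaces six cases with two. The paper's route gives the stronger statement that the padded matrix is genuinely Monge, which is more than SMAWK needs.
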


\begin{proof}
The recursion follows from Lemma~\ref{lem:gaps}, because the last term of the gap
sum involves only $\max S$ and the preceding element of $S$. For
\eqref{eq:monge}, let $\alpha:=\tau_i<\beta:=\tau_{i'}<\gamma:=\tau_j<\delta:=\tau_{j'}$\,.
The pairs $(\gamma-\alpha,\delta-\beta)$ and $(\delta-\alpha,\gamma-\beta)$ have the same
sum, and the second pair is more spread out, because
$\delta-\alpha\ge\max\{\gamma-\alpha,\delta-\beta\}$; concavity of $\varphi$ therefore
gives $\varphi(\gamma-\alpha)+\varphi(\delta-\beta)\ge\varphi(\delta-\alpha)+\varphi(\gamma-\beta)$.

For fixed $t\ge2$, consider the matrix $A$ with rows $j\in[t,n]$ and columns
$i\in[t-1,n-1]$, where $A(j,i):=F(i,t-1)+w(i,j)$ if $i<j$ (a \emph{defined} entry)
and $A(j,i):=-K(i-j+1)^2$ if $i\ge j$, for a constant $K$ that we choose below. We
claim that
\begin{equation}\label{eq:mongeA}
  A(j,i)+A(j',i')\ge A(j,i')+A(j',i)\quad\text{for all } j<j' \text{ and } i<i'.
\end{equation}
Let $B$ be the greatest absolute value of a defined entry, and let $K\ge3B+1$.
Because $(j',i)$ is defined whenever any of the other three entries is, and
$(j',i')$ is defined whenever $(j,i')$ is, exactly six patterns of defined entries
can occur. We write $h(\mu):=-K(\mu+1)^2$, so that $h(\mu)\le-K$ for $\mu\ge0$, and we use
the fact that $h(\mu)-h(\mu+\nu)=K\bigl((\mu+\nu+1)^2-(\mu+1)^2\bigr)\ge2K$ for
$\mu\ge0$ and $\nu\ge1$.
\begin{enumerate}
\item[(1)] All four entries are defined ($i'<j$). The terms $F(i,t-1)$ and
$F(i',t-1)$ cancel, and \eqref{eq:mongeA} is \eqref{eq:monge}.
\item[(2)] Only $(j,i')$ is undefined. The left-hand side is at least $-2B$, and the
right-hand side is at most $-K+B$, which is less than $-2B$.
\item[(3)] Exactly $(j,i')$ and $(j',i')$ are undefined. Then \eqref{eq:mongeA}
reads $A(j,i)+h(i'-j')\ge h(i'-j)+A(j',i)$, and, because $i'-j=(i'-j')+(j'-j)$, the
difference of the two sides is at least $2K-2B>0$.
\item[(4)] Exactly $(j,i)$ and $(j,i')$ are undefined. Symmetrically, the inequality
reads $h(i-j)+A(j',i')\ge h(i'-j)+A(j',i)$, and, because $i'-j=(i-j)+(i'-i)$, the
difference is at least $2K-2B>0$.
\item[(5)] Only $(j',i)$ is defined. With $\mu:=i-j\ge0$, $\nu:=i'-j'\ge0$, and
$\xi:=j'-i\ge1$, we have $i'-j=\mu+\nu+\xi$, and the difference of the two sides is
$K\bigl((\mu+\nu+\xi+1)^2-(\mu+1)^2-(\nu+1)^2\bigr)-A(j',i)\ge2K(\mu+1)(\nu+1)-B>0$.
\item[(6)] No entry is defined. With $\mu:=i-j'\ge0$, $\nu:=j'-j\ge1$, and
$\xi:=i'-i\ge1$, the difference of the two sides is
$K\bigl((\mu+\nu+\xi+1)^2+(\mu+1)^2-(\mu+\nu+1)^2-(\mu+\xi+1)^2\bigr)=2K\nu\xi>0$.
\end{enumerate}
Inequality \eqref{eq:mongeA} implies that, for $j<j'$, if column $i'$ is at least as
good as column $i<i'$ in row $j$, then it is also at least as good in row $j'$;
that is, $A$ is totally monotone for row maxima. Every row contains a defined
entry (column $t-1<j$), and every defined entry exceeds every undefined one, so the
row maxima of $A$ are the maxima in \eqref{eq:prec}. The so-called ``SMAWK
algorithm'' (see \cite{SMAWK}) finds all row maxima of $A$ with $O(n)$ evaluations of
entries. In an implementation, we compare entries lexicographically, by the pair
$\bigl(-(i-j+1)^2,0\bigr)$ for an undefined entry and $(0,A(j,i))$ for a defined one,
so $K$ and $B$ never need to be computed. This is valid because SMAWK accesses $A$
only through comparisons of two entries in the same row, and, for every
$K\ge3B+1$, each such comparison has the same outcome under both orders: two
defined entries compare by value; a defined entry exceeds an undefined one, because
$-B>-K\ge-K(i-j+1)^2$; and two undefined entries compare by $-(i-j+1)^2$ in both.
Hence, SMAWK makes the same choices as on $A$, for which the argument above holds
for every $K\ge3B+1$. Over $t=2,\dots,s$, the total time is
$O(ns)$, and storing the maximizing $i$ for each $(j,t)$ gives an optimal $S$ by
traceback.
\end{proof}

\paragraph{Input and preprocessing}
The quantities $a_i$ and $\tau_i$ depend only on the diagonal and the first
superdiagonal of $M$, and we obtain all of the entries that we need from the input
$C$ in $O(n)$ time.
\begin{enumerate}
\item[(a)] If $C^{-1}$ is tridiagonal and $M:=C$, then we read the $2n-1$ entries
$C_{kk}$ and $C_{k,k+1}$ and invert nothing. If, instead, we wish to apply
Theorem~\ref{thm:main} to $C^{-1}$, then Proposition~\ref{prop:recognize} gives
all of $C^{-1}$, namely its $2n-1$ nonzero entries, and $\ldet C$ from the same
entries of $C$ in $O(n)$ time.
\item[(b)] If $C$ is tridiagonal and $M:=C^{-1}$, then, with forward pivots
$p_1:=C_{11}$\,, $p_i:=C_{ii}-C_{i-1,i}^2/p_{i-1}$ (that is, $p_i=p(1,i)$ in
Lemma~\ref{lem:pivots}) and backward pivots $\bar p_n:=C_{nn}$\,,
$\bar p_i:=C_{ii}-C_{i,i+1}^2/\bar p_{i+1}$\,, we have $M_{ii}=1/(p_i+\bar p_i-C_{ii})$,
$M_{i,i+1}=-C_{i,i+1}M_{i+1,i+1}/p_i$\,, and $\ldet C=\sum_i\log p_i$\,.
\end{enumerate}
Forming a dense $C^{-1}$, or inverting by a general-purpose method, would be
wasteful. When $C$ is tridiagonal, $C^{-1}$ is,
in general, dense (by Lemma~\ref{lem:ou}, every correlation is a product of nonzero
factors), so merely writing it down costs $\Theta(n^2)$ time, at least as much as the
$O(n(n-s))$ cost of the entire dynamic program of Theorem~\ref{thm:precision}; a
general-purpose inversion costs $\Theta(n^3)$ time, more than the $O(ns^2)$ cost
of applying Theorem~\ref{thm:main} to $C$ directly, for every $s$. When $C^{-1}$ is
tridiagonal, $C$ is dense, and a general-purpose inversion of $C$ would again cost
$\Theta(n^3)$ time, far more than the $O(ns)$ cost of Theorem~\ref{thm:precision}.
We need not know in advance whether $C^{-1}$ is tridiagonal: in $O(n^2)$ time,
which is linear in the size of the input $C$, we can determine whether $C$ is
nonsingular with a tridiagonal inverse and, if it is, write down $C^{-1}$. The
following characterization is classical in other forms (see
\cite{Barrett1979} and the review \cite{Meurant1992}); we state
it in the form that we use and include a short proof.

\begin{proposition}\label{prop:recognize}
Let $C$ be an order-$n$ symmetric positive-semidefinite matrix. Then $C$ is
nonsingular and $C^{-1}$ is tridiagonal if and only if the following two conditions
hold:
\begin{enumerate}
\item[(i)] $C_{kk}>0$ for $1\le k\le n$, and $v_k>0$ for $1\le k\le n$, where
$v_1:=C_{11}$ and $v_{k+1}:=C_{k+1,k+1}-C_{k,k+1}^2/C_{kk}$;
\item[(ii)] $C_{ij}\,C_{j-1,j-1}=C_{i,j-1}\,C_{j-1,j}$ for all $1\le i\le j-2\le n-2$.
\end{enumerate}
In that case, with $\beta_k:=C_{k,k+1}/C_{kk}$\,, the nonzero entries of $Q:=C^{-1}$ are
$Q_{kk}=1/v_k+\beta_k^2/v_{k+1}$ (without the second term when $k=n$) and
$Q_{k,k+1}=Q_{k+1,k}=-\beta_k/v_{k+1}$\,, and $\ldet C=\sum_{k=1}^{n}\log v_k$\,.
Consequently, we can check (i) and (ii), and, if they hold, write down $C^{-1}$ and
$\ldet C$, in $O(n^2)$ time.
\end{proposition}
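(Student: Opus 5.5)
Both directions go through the Gaussian Markov structure, with the sequential (AR-type) factorization $Y_{k+1}=\beta_kY_k+\varepsilon_{k+1}$ as the organizing device.

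\emph{Necessity.} Suppose $C$ is nonsingular with $Q:=C^{-1}$ tridiagonal. Then $C$ is positive definite, so $C_{kk}>0$. For (ii), take $i\le j-2$. The vertex $j-1$ separates $i$ from $j$ in the path $G(Q)$, even when some $Q_{k,k+1}$ vanish, since then the graph is only more disconnected. As in the proof of Lemma~\ref{lem:ou}, $Y_i$ and $Y_j$ are therefore conditionally independent given $Y_{j-1}$, so $C_{ij}-C_{i,j-1}C_{j-1,j}/C_{j-1,j-1}=0$; this is (ii). For (i), the Markov property gives
\[
\mathrm{Var}(Y_{k+1}\mid Y_1,\dots,Y_k)=\mathrm{Var}(Y_{k+1}\mid Y_k)=C_{k+1,k+1}-C_{k,k+1}^2/C_{kk}=v_{k+1}.
\]
The chain rule then yields $\det C=\prod_k v_k$, and each conditional variance of a nondegenerate Gaussian is positive. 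These are the pivots of Cholesky, and positive definiteness makes them positive.

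\emph{Sufficiency.} Assume (i) and (ii). Define the unit lower-bidiagonal matrix $L$ with $L_{k+1,k}:=-\beta_k$ and $D:=\mathrm{diag}(v_1,\dots,v_n)$. The plan is to show $LCL^\top=D$. Then $C=L^{-1}DL^{-\top}$ is nonsingular, since $D$ is by (i), and $Q=L^\top D^{-1}L$, which is visibly tridiagonal. Expanding this product gives exactly $Q_{kk}=1/v_k+\beta_k^2/v_{k+1}$, $Q_{k,k+1}=-\beta_k/v_{k+1}$, and $\ldet C=\sum_k\log v_k$ because $\det L=1$.

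To verify $LCL^\top=D$, set $r_{k+1}:=$ row $k+1$ of $C$ minus $\beta_k$ times row $k$, and $r_1:=$ row $1$ of $C$. The entry of $LC$ in row $k+1$, column $m$ is $C_{k+1,m}-\beta_kC_{k,m}$. For $m\le k-1$ this vanishes by (ii) with $(i,j)=(m,k+1)$, using symmetry. For $m=k$ it is $C_{k+1,k}-\beta_kC_{kk}=0$ by the definition of $\beta_k$. For $m=k+1$ it equals $v_{k+1}$. Hence $LC$ is upper triangular with diagonal $(v_k)$. Multiplying by $L^\top$, which is unit upper triangular, leaves an upper triangular matrix with the same diagonal. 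But $LCL^\top$ is symmetric, so it must be the diagonal matrix $D$.

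The main obstacle is keeping the index bookkeeping of (ii) straight in the vanishing step, that is, checking that $m\le k-1$ is exactly the range $i\le j-2$. The symmetry trick avoids computing the upper part directly.

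\emph{Complexity.} Checking (i) costs $O(n)$ time and checking (ii) costs $O(n^2)$ time, one multiplication-comparison per pair. The formulas for $Q$ and for $\ldet C$ then cost $O(n)$ time, so the total is $O(n^2)$.
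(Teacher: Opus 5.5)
Your proof is correct and is essentially the paper's: (ii) is necessary by the same separation and conditional-independence argument, and sufficiency goes through the same factorization $C^{-1}=L^{\top}D^{-1}L$ with the same unit lower-bidiagonal $L$ (the paper's $E$). The only difference is how you verify $LCL^{\top}=D$: you do it directly (each subdiagonal entry of $LC$ vanishes by (ii), and symmetry then forces diagonality), whereas the paper builds the process $X_{k+1}=\beta_kX_k+\varepsilon_{k+1}$ and matches its covariance to $C$ by induction on $j-i$, so your check is slightly more economical because it needs neither the auxiliary random vector nor the induction.
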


\begin{proof}
Suppose that $C$ is nonsingular, hence positive definite, and that $C^{-1}$ is
tridiagonal. Then $C_{kk}>0$, and $v_{k+1}=\det C[\{k,k+1\},\{k,k+1\}]/C_{kk}>0$, so
(i) holds. Condition~(ii) holds by the conditional-independence argument in the
proof of Lemma~\ref{lem:ou}, because $j-1$ separates $i$ from $j$ in $G(C^{-1})$, a
subgraph of a path.

Conversely, suppose that (i) and (ii) hold. Let $X_1$ be Gaussian with variance
$v_1$\,, and let $X_{k+1}:=\beta_kX_k+\varepsilon_{k+1}$ for $1\le k<n$, where
$\varepsilon_2,\dots,\varepsilon_n$ are independent Gaussian random variables,
independent of $X_1$\,, with variances $v_2,\dots,v_n$\,. Let $\hat C$ be the covariance
matrix of $X$. By induction on $k$, we have
$\hat C_{k+1,k+1}=\beta_k^2\hat C_{kk}+v_{k+1}=C_{k+1,k+1}$ and
$\hat C_{k,k+1}=\beta_k\hat C_{kk}=C_{k,k+1}$\,. For $j\ge i+2$, we have
$\hat C_{ij}=\beta_{j-1}\hat C_{i,j-1}=C_{j-1,j}\hat C_{i,j-1}/C_{j-1,j-1}$\,, and (ii) gives the same
recursion for $C_{ij}$\,; hence, by induction on $j-i$, we have $\hat C=C$. Writing
$\varepsilon_1:=X_1$\,, we have $EX=\varepsilon$, where $E$ is unit lower bidiagonal
with subdiagonal entries $-\beta_k$\,, so
$C^{-1}=\hat C^{-1}=E^{\top}\mathrm{Diag}(v)^{-1}E$, which is positive definite,
tridiagonal, and has the stated entries; moreover, $\det C=\prod_kv_k$ because
$\det E=1$.

Condition~(i) costs $O(n)$ time, condition~(ii) consists of $O(n^2)$ tests that
each cost $O(1)$ time, and the formulas for $C^{-1}$ and $\ldet C$ cost $O(n)$
time.
\end{proof}

\begin{corollary}\label{cor:both}
Given $C$ as input, if $C^{-1}$ is known to be tridiagonal, then we can compute
$z(C,s)$ in $O\bigl(n\min\{s,(n-s)^2\}\bigr)$ time, and if $C$ is tridiagonal,
then we can compute $z(C,s)$ in $O\bigl(n\min\{s^2,n-s\}\bigr)$ time;
these bounds include all preprocessing. If it is not known in advance whether
$C^{-1}$ is tridiagonal, then deciding this costs an additional $O(n^2)$ time
(Proposition~\ref{prop:recognize}). More generally, if $C$ or $C^{-1}$ is a
symmetric permutation of a tridiagonal matrix, then finding such a permutation, or
deciding that none exists, costs $O(n^2)$ time (Remark~\ref{rem:perm}), after which
the same bounds apply. In all cases, the running time is $O(n^2)$.
\end{corollary}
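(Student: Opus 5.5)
The corollary is an assembly of earlier results, so the plan is to take the two structural cases in turn. In each, both the structure-preserving method and the complementation method are available, and we run whichever has the smaller bound. Throughout we use the identity $z(C,s)=\ldet C+z(C^{-1},n-s)$. It follows because, for positive-definite $C$, $\det C[S,S]=\det C\cdot\det C^{-1}[\Nn\setminus S,\Nn\setminus S]$ (Jacobi's complementary-minor identity), and $S\mapsto\Nn\setminus S$ is a bijection between $s$-subsets and $(n-s)$-subsets.

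\emph{Case $C^{-1}$ tridiagonal.} First we may reduce to $Q_{k,k+1}\ne0$ for all $k$.
\begin{itemize}
\item If some $Q_{k,k+1}=0$, we split into blocks. We solve each block for every budget and combine the blocks by the knapsack recursion of Remark~\ref{rem:budgets}.
\item This combination needs care. With blocks of sizes $n_r$ and budgets capped at $s$, the merge costs $O(\sum_r n_r s)$ if done incrementally, so it stays within $O(ns)$. I regard verifying this accounting as the one delicate point; I would cite Remark~\ref{rem:budgets} for it.
\end{itemize}
With $M:=C$, preprocessing item (a) reads the $2n-1$ entries needed for $a_i$ and $\tau_i$ in $O(n)$ time. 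Theorem~\ref{thm:precision} then gives $z(C,s)$ in $O(ns)$ time. Alternatively, Proposition~\ref{prop:recognize} yields the $2n-1$ nonzero entries of $C^{-1}$ and $\ldet C$ in $O(n)$ time, since the tests are unnecessary when tridiagonality is known. We then apply Theorem~\ref{thm:main} to $C^{-1}$ with budget $n-s$, at cost $O(n(n-s)^2)$. Adding $\ldet C$ gives $z(C,s)$. Taking the better of the two yields $O(n\min\{s,(n-s)^2\})$.

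\emph{Case $C$ tridiagonal.} Theorem~\ref{thm:main} applied directly gives $O(ns^2)$. Alternatively, preprocessing item (b) supplies the diagonal and superdiagonal of $M:=C^{-1}$ and $\ldet C$ in $O(n)$ time. Theorem~\ref{thm:precision} with budget $n-s$ then costs $O(n(n-s))$, again after the same block reduction if some off-diagonal entry of $Q=C$ vanishes. Taking the minimum gives $O(n\min\{s^2,n-s\})$.

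\emph{Recognition and the final bound.} If it is not known whether $C^{-1}$ is tridiagonal, Proposition~\ref{prop:recognize} decides this in $O(n^2)$ time. For symmetric permutations, we invoke Remark~\ref{rem:perm} to find the permutation, or to rule one out, in $O(n^2)$ time. Relabeling the indices by this permutation does not change $z$, and the bounds above then apply.

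It remains to check that both minima are $O(n^2)$:
\begin{itemize}
\item If $s\le\sqrt n$, then $s^2\le n$, so $n\min\{s^2,n-s\}\le ns^2\le n^2$.
\item Otherwise, $n\min\{s^2,n-s\}\le n(n-s)\le n^2$.
\item Likewise, $n\min\{s,(n-s)^2\}\le ns\le n^2$.
\end{itemize}
Adding the $O(n^2)$ recognition cost keeps every case within $O(n^2)$.
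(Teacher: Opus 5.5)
Your proposal is correct and follows essentially the same route as the paper: in each structural case you run Theorem~\ref{thm:main} and Theorem~\ref{thm:precision} (one on the given matrix, the other on the complement with budget $n-s$, using the $O(n)$ preprocessing (a) or (b)), take the better bound, charge the block merge at most the budget times the total block order, and bound each minimum by $n$. You also justify $z(C,s)=\ldet C+z(C^{-1},n-s)$ via Jacobi's complementary-minor identity and spell out the merge-cost accounting; the paper leaves both implicit, and your versions of both are correct.
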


\begin{proof}
If $C^{-1}$ is tridiagonal, then we apply Theorem~\ref{thm:precision} with $M:=C$,
or Theorem~\ref{thm:main} to $C^{-1}$ with budget $n-s$, using~(a). If $C$ is
tridiagonal, then we apply Theorem~\ref{thm:main} to $C$, or
Theorem~\ref{thm:precision} with $M:=C^{-1}$ and budget $n-s$, using~(b). In each
case, the preprocessing costs $O(n)$ time. When $M$ is block diagonal, merging the
tables of the blocks costs, in total, at most the budget times the sum of the block
orders; this cost is within the same bounds. Finally, $\min\{s,(n-s)^2\}\le n$ and
$\min\{s^2,n-s\}\le n$.
\end{proof}

\begin{remark}[Inverse and permuted tridiagonal matrices]\label{rem:perm}
When $C^{-1}$ is tridiagonal, Algorithm~\ref{alg:new}, applied to $C^{-1}$ with
budget $n-s$, runs in $O(n(n-s)^2)$ time, by the identity $z(C,s)=\ldet C+z(C^{-1},n-s)$ (see \cite{ALTHANI2023120}); Corollary~\ref{cor:both}
records how to combine this with Theorem~\ref{thm:precision}.

Symmetric permutations need more care. If $C$ is a symmetric permutation of a
tridiagonal matrix, then we recognize that its support graph is a disjoint union
of paths, and find the order along each path, in time linear in the size of $C$.
If, instead, $C^{-1}$ is a symmetric permutation of a tridiagonal matrix, then the
support graph of $C^{-1}$ is not visible from the dense input $C$, but
Lemma~\ref{lem:ou} recovers it. The blocks of $C^{-1}$ are the connected
components of the support graph of $C$, and, by Lemma~\ref{lem:ou}, every entry of
$C$ within a block is nonzero; so we compute the components in $O(n^2)$ time, and
we reject if some component is not a clique. Within a component, by
Lemma~\ref{lem:ou},
$d(i,j):=-\log\bigl(C_{ij}^2/(C_{ii}C_{jj})\bigr)$ is the distance between the
points $\tau$ of $i$ and $j$ in the unknown order along the path, so $d$ is a line
metric on distinct points. Hence, the vertex farthest from an arbitrary vertex is
an endpoint of the path, and sorting the component by distance from that endpoint,
in $O(n\log n)$ time, recovers the order along the path, up to reversal. Finally,
for each component, we test the conditions of Proposition~\ref{prop:recognize} in
this order, in $O(n^2)$ total time. If they hold, then $C^{-1}$ is tridiagonal in
the resulting order; if they fail, then no symmetric permutation of $C^{-1}$ is
tridiagonal, because any such permutation would induce the same order, up to
reversal, on each component. Thus, recognizing the case in which $C^{-1}$ is a
permuted tridiagonal matrix fits within the same $O(n^2)$ bound as
Proposition~\ref{prop:recognize}.
\end{remark}

Figure~\ref{fig:recipe} collects the resulting recipe: which structure to test,
which regime of $s$ we are in, and which method to apply in each case.

\begin{figure}[!htbp]
\centering
\resizebox{\textwidth}{!}{%
\begin{tikzpicture}[>=stealth,
  box/.style={draw, rounded corners, align=center, inner sep=4pt},
  dec/.style={draw, align=center, inner sep=4pt},
  lbl/.style={inner sep=1.5pt, fill=white}]
\node[box] (in) at (0,1.8) {Input: $C\succ0$ of order $n$, and an integer $s$};
\node[dec] (d0) at (0,0) {Is $C$ tridiagonal,\\ up to a symmetric permutation?};
\node[dec] (dl) at (-4.6,-2.4) {Is $s^2\le n-s$?};
\node[dec] (dr) at (4.6,-2.4) {Is $C^{-1}$ tridiagonal,\\ up to a symmetric permutation?\\ (Proposition~\ref{prop:recognize}, Remark~\ref{rem:perm}: $O(n^2)$)};
\node[box] (l1) at (-7.0,-5.2) {Theorem~\ref{thm:main} on $C$,\\ budget $s$\\[2pt] $O(ns^2)$};
\node[box] (l2) at (-2.4,-5.2) {Theorem~\ref{thm:precision} on $C^{-1}$,\\ budget $n-s$\\[2pt] $O(n(n-s))$};
\node[dec] (dr2) at (3.6,-5.2) {Is $s\le(n-s)^2$?};
\node[box] (out) at (9.4,-5.2) {neither structure:\\ out of scope};
\node[box] (r1) at (1.4,-8.0) {Theorem~\ref{thm:precision} on $C$,\\ budget $s$\\[2pt] $O(ns)$};
\node[box] (r2) at (5.8,-8.0) {Theorem~\ref{thm:main} on $C^{-1}$,\\ budget $n-s$\\[2pt] $O(n(n-s)^2)$};
\draw[->] (in) -- (d0);
\draw[->] (d0.west) -| node[lbl,pos=0.25,above]{yes} (dl.north);
\draw[->] (d0.east) -| node[lbl,pos=0.25,above]{no} (dr.north);
\draw[->] (dl.west) -| node[lbl,pos=0.25,above]{yes} (l1.north);
\draw[->] (dl.east) -| node[lbl,pos=0.25,above]{no} (l2.north);
\draw[->] (dr.east) -| node[lbl,pos=0.25,above]{no} (out.north);
\draw[->] (dr.south) -- node[lbl,right]{yes} (dr2.north);
\draw[->] (dr2.west) -| node[lbl,pos=0.25,above]{yes} (r1.north);
\draw[->] (dr2.east) -| node[lbl,pos=0.25,above]{no} (r2.north);
\end{tikzpicture}}
\caption{How to solve MESP in $O(n^2)$ time when $C$ or $C^{-1}$ is tridiagonal,
up to a symmetric permutation; in the permuted case, we first reorder $C$ by the
permutation found as in Remark~\ref{rem:perm}. Each branch also needs the $O(n)$ preprocessing described above: the band of
$C^{-1}$ and $\ldet C$ for the branch that applies Theorem~\ref{thm:precision} to
$C^{-1}$, and the tridiagonal matrix $C^{-1}$ itself for the branch that applies
Theorem~\ref{thm:main} to $C^{-1}$. If the matrix in use is block diagonal, then we
treat the blocks separately and combine them as in Remark~\ref{rem:budgets}.}
\label{fig:recipe}
\end{figure}

\begin{remark}[Neither method dominates]\label{rem:crossover}
When $C$ is tridiagonal, Theorem~\ref{thm:main} applies to $C$ itself, with budget
$s$, at a cost of $O(ns^2)$. Theorem~\ref{thm:precision} does not apply to $C$
itself, because a tridiagonal $C$ does not, in general, have a tridiagonal inverse;
it applies only to $C^{-1}$, with budget $n-s$, at a cost of $O(n(n-s))$. The first
method is faster when $s^2<n-s$, that is, roughly when $s<\sqrt{n}$ (for example,
$O(n)$ versus $O(n^2)$ time when $s=1$), and the second method is faster otherwise
(for example, $O(n^2)$ versus $O(n^3)$ time when $s=n/2$). Moreover,
Theorem~\ref{thm:main} yields $z(C,t)$ for every $t\le s$, whereas the route through
$C^{-1}$ yields $z(C,t)$ for every $t\ge s$. Symmetrically, when $C^{-1}$ is
tridiagonal, Theorem~\ref{thm:precision} is faster unless $(n-s)^2<s$.

These crossovers are asymptotic, and the constants move them substantially. In
matched implementations of the two dynamic programs, we observed crossovers near
$s\approx8\sqrt{n-s}$ and, symmetrically, near $n-s\approx8\sqrt{s}$; that is, the
range in which Theorem~\ref{thm:main} is preferable is about eight times wider, in
$s$, than the asymptotic rule suggests. Two specific effects account for much of this. First, an
inner step of Theorem~\ref{thm:main} is one addition and one comparison of tabulated
values, whereas an inner step of Theorem~\ref{thm:precision} evaluates
$\varphi(\tau_j-\tau_i)$, and the arguments $\tau_j-\tau_i$ are all distinct, so
nothing can be tabulated; still, in our implementation, a logarithm or an
exponential costs only about twice an addition, so this effect alone is small. Second, and
more important, the SMAWK algorithm evaluates roughly $13$ entries per row of each
layer, rather than the one evaluation per state of Theorem~\ref{thm:main}. The
product of these two factors, about $25$, enters the crossover under a square root.
\end{remark}

\section{Spiders with few legs}\label{sec:spiders}

Following~\cite{ALTHANI2023120}, a \emph{spider} with $r\ge3$ legs
is a tree with one vertex of degree $r$, the \emph{body}, such that deleting the
body leaves $r$ paths, the \emph{legs}. Suppose that $G(C)$ is a spider with body $1$ and legs of orders $n_1,\dots,n_r\ge1$, so that
$n=1+\sum_in_i\ge4$. We index the
vertices of leg~$i$ as $v_{i,1},\dots,v_{i,n_i}$\,, in order of increasing distance
from the body, and we write $\beta_i:=C_{1,v_{i,1}}$ for the entry that joins the
body to leg~$i$. For $0\le e\le n_i$\,, we let $E_i(e):=\{v_{i,1},\dots,v_{i,e}\}$
and $T_i(e):=C[E_i(e),E_i(e)]$, a tridiagonal matrix. As in Section~\ref{sec:prelim},
$L_i(a,e)$ denotes the log-determinant of $C$ on $\{v_{i,a},\dots,v_{i,e}\}$, with
$L_i(a,a-1):=0$.

\begin{lemma}\label{lem:body}
Let $S\subset\Nn$\,.
\begin{enumerate}
\item[(a)] If $1\notin S$, then $\ldet C[S,S]=\sum_{i=1}^{r}\ldet C[S\cap V_i,\,S\cap V_i]$,
where $V_i$ is the vertex set of leg~$i$.
\item[(b)] If $1\in S$, let $e_i$ be the greatest $e$ with $E_i(e)\subset S$, let
$P:=\{1\}\cup\bigcup_i E_i(e_i)$, and let $R_i:=(S\cap V_i)\setminus E_i(e_i)$.
Then $\ldet C[S,S]=\ldet C[P,P]+\sum_{i=1}^{r}\ldet C[R_i,R_i]$ and
\begin{equation}\label{eq:body}
  \ldet C[P,P]=\sum_{i=1}^{r}L_i(1,e_i)+\log\Bigl(c_1-\sum_{i=1}^{r}w_i(e_i)\Bigr),
\end{equation}
where $w_i(0):=0$ and, for $e\ge1$, $w_i(e):=\beta_i^2\,[T_i(e)^{-1}]_{11}
=\beta_i^2\exp\bigl(L_i(2,e)-L_i(1,e)\bigr)$; here, each $T_i(e_i)$ is nonsingular and the
argument of the logarithm is positive, because $C[P,P]$ is positive definite.
\end{enumerate}
\end{lemma}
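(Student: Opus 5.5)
The plan is to use the sparsity pattern of $C$ (its support graph is a spider) to obtain block-diagonal decompositions, as in Lemma~\ref{lem:pieces}, and then to handle the one block containing the body by a Schur complement.

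\emph{Part (a).} If $1\notin S$, there is no edge of $G(C)$ between $S\cap V_i$ and $S\cap V_{i'}$ for $i\ne i'$, because every path joining two legs passes through the body. So $C[S,S]$ is block diagonal with blocks $C[S\cap V_i,S\cap V_i]$, and its log-determinant is the sum of the blocks' log-determinants.

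\emph{First identity of (b).} Let $1\in S$ and split $S$ into $P$ and the sets $R_i$. By the maximality of $e_i$, the vertex $v_{i,e_i+1}$ (when it exists) is not in $S$. Hence every vertex of $R_i$ lies at distance at least $e_i+2$ from the body along leg~$i$. Such a vertex has no neighbor in $P$, since its only neighbors are its two neighbors on the leg, and the nearer one is either outside $S$ or itself in $R_i$. It also has no neighbor in $R_{i'}$ for $i'\ne i$. Therefore $C[S,S]$ is block diagonal with blocks $C[P,P]$ and $C[R_i,R_i]$, which gives the first identity.

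\emph{Formula \eqref{eq:body}.} Order $P$ with the body last. Then
\[
C[P,P]=\begin{pmatrix}T&u\\u^{\top}&c_1\end{pmatrix},
\]
where $T$ is block diagonal with blocks $T_i(e_i)$, and $u$ has entry $\beta_i$ at the position of $v_{i,1}$ (for $e_i\ge1$) and zeros elsewhere. Since $C$ is positive definite, $C[P,P]$ is positive definite, so each $T_i(e_i)$ is positive definite (hence nonsingular) and the Schur complement $c_1-u^{\top}T^{-1}u$ is positive. The determinant formula for a Schur complement gives
\[
\ldet C[P,P]=\ldet T+\log\bigl(c_1-u^{\top}T^{-1}u\bigr).
\]
Because $T$ is block diagonal, $\ldet T=\sum_iL_i(1,e_i)$, and $u^{\top}T^{-1}u=\sum_i\beta_i^2[T_i(e_i)^{-1}]_{11}$; legs with $e_i=0$ contribute nothing, which matches $w_i(0)=0$.

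Finally, by Cramer's rule, $[T_i(e)^{-1}]_{11}=\det T_i'/\det T_i(e)$, where $T_i'$ is the submatrix of $T_i(e)$ on $\{v_{i,2},\dots,v_{i,e}\}$. This ratio equals $\exp\bigl(L_i(2,e)-L_i(1,e)\bigr)$, with $L_i(2,1)=0$ when $e=1$.

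The main obstacle is keeping the block-diagonal claim in (b) airtight. Everything rests on maximality of $e_i$ forcing $v_{i,e_i+1}\notin S$ (or $e_i=n_i$), so that $R_i$ is separated from $P$. The remaining steps are the standard Schur-complement and Cramer identities.
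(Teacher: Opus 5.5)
Your proposal is correct and follows the paper's proof essentially step for step. The block-diagonal splitting comes from maximality of $e_i$ forcing $v_{i,e_i+1}\notin S$; then the Schur complement of the block-diagonal leg matrix gives the determinant, and the cofactor (Cramer) formula gives $[T_i(e)^{-1}]_{11}$. Ordering the body last rather than first, as the paper does, is only a cosmetic difference.
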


\begin{proof}
(a) Without the body, no nonzero entry of $C$ joins two legs, so $C[S,S]$ is block
diagonal over the legs.
(b) For each $i$, if $e_i<n_i$\,, then vertex $v_{i,e_i+1}$ is not in $S$, and if
$e_i=n_i$\,, then $R_i$ is empty. In either case, as in Lemma~\ref{lem:pieces}, no
nonzero entry of $C$ joins $R_i$ to $P$ or to $R_{i'}$ for $i'\ne i$, and $C[S,S]$ is block
diagonal with blocks $C[P,P]$ and $C[R_i,R_i]$. We order $P$ with the body first and write $C[P,P]=\bigl(\begin{smallmatrix}c_1&y^{\top}\\
y&\Delta\end{smallmatrix}\bigr)$ with $\Delta:=\mathrm{Diag}\bigl(T_1(e_1),\dots,T_r(e_r)\bigr)$,
where $y$ has entry $\beta_i$ in the first position of block~$i$ (when $e_i\ge1$)
and zeros elsewhere. Taking the Schur complement of $\Delta$ gives
$\det C[P,P]=\det\Delta\,\bigl(c_1-y^{\top}\Delta^{-1}y\bigr)$; because $C[P,P]$ and its
principal submatrix $\Delta$ are positive definite, $c_1-y^{\top}\Delta^{-1}y>0$. Moreover,
$y^{\top}\Delta^{-1}y=\sum_i\beta_i^2[T_i(e_i)^{-1}]_{11}$\,. Finally,
$[T_i(e)^{-1}]_{11}$ is the ratio of the determinant of $T_i(e)$ with its first row
and column deleted to $\det T_i(e)$; this identity gives the exponential form of
$w_i(e)$.
\end{proof}

From Lemma~\ref{lem:body}, the legs interact in only two ways: through the total
budget and, when the body is chosen, through the single scalar
$\sum_i w_i(e_i)$ inside one logarithm. Everything else is separable across legs.

\paragraph{Leg tables}
For each leg~$i$, for $0\le m\le n_i$\,, and for $0\le t\le s$, let $Z_i(m,t)$ be the
greatest value of $\ldet C[A,A]$ over sets $A\subset\{v_{i,m+1},\dots,v_{i,n_i}\}$
with $|A|=t$; for $m\ge n_i$\,, we set $Z_i(m,0):=0$ and $Z_i(m,t):=-\infty$ for
$t>0$. If we list the vertices of leg~$i$ in reverse order, from the tip toward the
body, then the sets $\{v_{i,m+1},\dots,v_{i,n_i}\}$ are exactly the prefixes, so
Theorem~\ref{thm:main}, applied to this reversed path, gives
$Z_i(m,t)=\max\{g(n_i-m,t),H(n_i-m,t)\}$ for all $m$ and $t$ at once, in
$O(n_is^2)$ time. The same pass, together with Lemma~\ref{lem:pivots}, gives every
$L_i(1,e)$, $L_i(2,e)$, and $w_i(e)$ with $e\le s$ in $O(n_is)$ time.

\begin{theorem}\label{thm:spider}
Suppose that $G(C)$ is a spider with $r$ legs, and let
$\mathcal{E}:=\{e\in\mathbb{Z}^r : 0\le e_i\le n_i\,,\ \sum_ie_i\le s-1\}$. Then
$z(C,s)=\max\{z_{\mathrm{out}},z_{\mathrm{in}}\}$, where
\begin{align*}
  z_{\mathrm{out}}&:=\max\Bigl\{\sum_{i=1}^{r}Z_i(0,t_i) : \sum_it_i=s,\ t\ge0\Bigr\},\\
  z_{\mathrm{in}}&:=\max_{e\in\mathcal{E}}\Bigl\{\sum_{i=1}^{r}L_i(1,e_i)
  +\log\Bigl(c_1-\sum_{i=1}^{r}w_i(e_i)\Bigr)\\
  &\qquad\qquad+\max\Bigl\{\sum_{i=1}^{r}Z_i(e_i+1,t_i) : \sum_it_i=s-1-\sum_ie_i\,,\ t\ge0\Bigr\}\Bigr\}.
\end{align*}
We can compute $z(C,s)$ in
$O\bigl(ns^2+s^2\min\bigl\{r|\mathcal{E}|,\binom{s+r-1}{r-1}\bigr\}\bigr)$ time,
where $|\mathcal{E}|\le\binom{s+r-1}{r}$; for fixed $r$, this is
$O(ns^2+s^{r+1})$.
\end{theorem}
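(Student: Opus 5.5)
We prove correctness by splitting on whether the body belongs to $S$ and applying Lemma~\ref{lem:body}; then we bound the cost of evaluating the two maxima.

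\emph{Correctness.} If $1\notin S$, Lemma~\ref{lem:body}(a) makes $\ldet C[S,S]$ separable over the legs. For fixed leg budgets $t_i=|S\cap V_i|$, each leg is maximized independently, and its best value is $Z_i(0,t_i)$ by definition. Maximizing over compositions of $s$ gives $z_{\mathrm{out}}$. Conversely, every term of $z_{\mathrm{out}}$ is attained by some feasible $S$ with $1\notin S$.

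If $1\in S$, define $e_i$, $P$, and $R_i$ as in Lemma~\ref{lem:body}(b). Then $\sum_ie_i=|P|-1\le s-1$, so $e\in\mathcal{E}$. Each $R_i$ is a subset of $\{v_{i,e_i+2},\dots,v_{i,n_i}\}$, because it omits $v_{i,e_i+1}$. Hence $\ldet C[R_i,R_i]\le Z_i(e_i+1,|R_i|)$, and $\sum_i|R_i|=s-1-\sum_ie_i$. Together with \eqref{eq:body}, this shows $\ldet C[S,S]$ is at most the term of $z_{\mathrm{in}}$ indexed by $e$.

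Conversely, fix $e\in\mathcal{E}$ and any choice of optimal $R_i\subset\{v_{i,e_i+2},\dots\}$. Put $S:=\{1\}\cup\bigcup_i(E_i(e_i)\cup R_i)$. This $S$ may have a larger "$e_i$" than $e_i$, since $R_i$ could be nonempty while $v_{i,e_i+1}\notin S$ always holds. In fact, because $v_{i,e_i+1}\notin S$, the maximal prefix of leg~$i$ inside $S$ is exactly $E_i(e_i)$. So Lemma~\ref{lem:body}(b) applies with this very $e$, and the value of $S$ equals the term. This also shows the argument of the logarithm is positive for every $e\in\mathcal{E}$, so every term is well defined. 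Hence $z(C,s)=\max\{z_{\mathrm{out}},z_{\mathrm{in}}\}$.

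\emph{Complexity.} The leg tables $Z_i$, together with $L_i(1,e)$ and $w_i(e)$ for $e\le s$, cost $\sum_iO(n_is^2)=O(ns^2)$ by the paragraph on leg tables. The quantity $z_{\mathrm{out}}$ is a knapsack-type max-plus convolution of $r$ vectors of length $s+1$, which costs $O(rs^2)=O(ns^2)$.

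For $z_{\mathrm{in}}$ there are two ways to proceed, and we take the better.

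\emph{First way.} Enumerate $e\in\mathcal{E}$. For each $e$, compute the inner maximum by successive max-plus convolution of the $r$ vectors $Z_i(e_i+1,\cdot)$, truncated to budget $s$. This costs $O(rs^2)$ per $e$, for a total of $O(s^2r|\mathcal{E}|)$. Enumerating $\mathcal{E}$ itself is cheaper than this. The bound $|\mathcal{E}|\le\binom{s+r-1}{r}$ follows from counting nonnegative $r$-vectors with sum at most $s-1$.

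\emph{Second way.} This is the step I expect to be the main obstacle, because the logarithm of the sum $\sum_iw_i(e_i)$ prevents separating the legs. I would instead guess the budget vector. Group by the total leg allocation $u_i:=e_i+|R_i|$ (or equivalently enumerate $(t_i)$ jointly), with $\sum_iu_i=s-1$; there are $\binom{s+r-2}{r-1}\le\binom{s+r-1}{r-1}$ such vectors. For each leg and each $u_i$, the pairs $(e_i,|R_i|)$ form only $O(s)$ options. The log coupling still ties the choices of $e_i$ across legs, however. To handle this, I would cap the leg pieces by $s$, so that the $(e_i,t_i)$ options per leg number $O(s^2)$ in total. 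Then I would charge $O(s^2)$ per budget vector by a careful enumeration that bounds the inner work of each composition by $s^2$.

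If this accounting does not work out directly, I would prove the second term as $s^2$ times the number of compositions through the identity $\sum_{e}$ (number of $t$-compositions) $\le s^2\binom{s+r-1}{r-1}$. Finally, for fixed $r$, both $r|\mathcal{E}|$ and $\binom{s+r-1}{r-1}$ are $O(s^{r-1})$, which yields $O(ns^2+s^{r+1})$.
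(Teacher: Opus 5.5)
Your proof that $z(C,s)=\max\{z_{\mathrm{out}},z_{\mathrm{in}}\}$ is sound. The sentence about a larger ``$e_i$'' contradicts itself, but you correct it immediately. Your first way correctly gives the $O(s^2r|\mathcal{E}|)$ term.

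The gap is the second term, $s^2\binom{s+r-1}{r-1}$, and with it the fixed-$r$ bound.
\begin{itemize}
\item Grouping by the total allocation $u$ does not help. For a fixed $u$, the term $\log\bigl(c_1-\sum_iw_i(e_i)\bigr)$ still couples the legs, so you must range over all $\prod_i(u_i+1)$ vectors $e\le u$. No $O(s^2)$ charge per composition is available.
\item Your fallback ``identity'' is false. When all $n_i\ge s$, the number of pairs $(e,t)$ with $\sum_ie_i+\sum_it_i=s-1$ is $\binom{s+2r-2}{2r-1}=\Theta(s^{2r-1})$. For $r\ge3$ and large $s$ this exceeds $s^2\binom{s+r-1}{r-1}=\Theta(s^{r+1})$; for $r=3$, $s=100$, the two sides are about $9.2\cdot10^7$ and $5.2\cdot10^7$.
\item Your last step claims $r|\mathcal{E}|=O(s^{r-1})$ for fixed $r$, which is false. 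When $n_i\ge s-1$ for all $i$, $|\mathcal{E}|=\binom{s+r-1}{r}=\Theta(s^r)$. So the first way alone gives only $O(ns^2+s^{r+2})$, and the stated $O(ns^2+s^{r+1})$ depends entirely on the unproven second term.
\end{itemize}

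The missing idea is to share partial merges among elements of $\mathcal{E}$ that have a common prefix.
\begin{itemize}
\item Enumerate $\mathcal{E}$ by depth-first search over prefixes $(e_1,\dots,e_j)$. For the current prefix, keep the max-plus merge of $Z_1(e_1+1,\cdot),\dots,Z_j(e_j+1,\cdot)$, together with the partial sums of $L_i(1,e_i)$ and $w_i(e_i)$.
\item Extending a prefix of length $j-1\le r-2$ costs one pairwise merge, $O(s^2)$.
\item There are at most $\binom{s-1+j}{j}$ prefixes of length $j$, and $\sum_{j=0}^{r-1}\binom{s-1+j}{j}=\binom{s+r-1}{r-1}$.
\item For a complete $e$, the last merge is needed only at the single budget $s-1-\sum_ie_i$, so it costs $O(s)$. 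Since $|\mathcal{E}|\le\binom{s+r-1}{r}=\frac{s}{r}\binom{s+r-1}{r-1}$, the leaves together cost $O\bigl(s^2\binom{s+r-1}{r-1}\bigr)$.
\item Every prefix extends by zeros to an element of $\mathcal{E}$, so the number of prefixes of each length is also at most $|\mathcal{E}|$.
\end{itemize}
This single enumeration therefore achieves $s^2\min\bigl\{r|\mathcal{E}|,\binom{s+r-1}{r-1}\bigr\}$ directly, with no need to choose between two methods.
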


\begin{proof}
The two formulas follow from Lemma~\ref{lem:body}: in case~(a), the legs are
independent, and in case~(b), once we fix $e$, the sets $R_i$ are arbitrary subsets
of $\{v_{i,e_i+2},\dots,v_{i,n_i}\}$ whose sizes sum to $s-1-\sum_ie_i$\,. Every inner
maximum over $(t_1,\dots,t_r)$ is a knapsack-type merge of $r$ tables of length
$s+1$, built by $r-1$ pairwise merges; a pairwise merge costs $O(s^2)$ time, or
$O(s)$ time if we need it at only one budget. The leg tables cost
$O(\sum_in_is^2)=O(ns^2)$ time, and $z_{\mathrm{out}}$ costs $O(rs^2)$ time.

For $z_{\mathrm{in}}$\,, we enumerate $\mathcal{E}$ in lexicographic order, by
depth-first search over the prefixes $(e_1,\dots,e_j)$ of its elements, and we keep,
for the current prefix, the merge of $Z_1(e_1+1,\cdot),\dots,Z_j(e_j+1,\cdot)$
together with the partial sums of $L_i(1,e_i)$ and $w_i(e_i)$. Extending a prefix of
length $j-1\le r-2$ by one component costs one pairwise merge, $O(s^2)$ time. For a
complete $e\in\mathcal{E}$, the last merge is needed only at the single budget
$s-1-\sum_ie_i$\,, so it costs $O(s)$ time, and \eqref{eq:body} costs $O(1)$ time
from the partial sums. Prefixes of length $j$ are nonnegative integer vectors with
$j$ components that sum to at most $s-1$, so there are at most $\binom{s-1+j}{j}$
of them, and $\sum_{j=0}^{r-1}\binom{s-1+j}{j}=\binom{s+r-1}{r-1}$; moreover, each
extends (by zeros) to an element of $\mathcal{E}$, so there are at most
$|\mathcal{E}|$ of them for each $j$. For $j=r$, the same count gives
$|\mathcal{E}|\le\binom{s-1+r}{r}=\frac{s}{r}\binom{s+r-1}{r-1}$, so the $O(s)$ cost
per element of $\mathcal{E}$ is within the stated bound.
\end{proof}

\cite{ALTHANI2023120} established polynomial solvability for fixed
$r$ by counting $O(n^r)$ pieces that contain the body. Theorem~\ref{thm:spider}
makes the dependence explicit: the running time is linear in $n$, the dependence
on $r$ is confined to the body-containing case, and the case $1\notin S$ costs only
$O(ns^2)$ time for every $r$. By the identity $z(C,s)=\ldet C+z(C^{-1},n-s)$, the
same conclusions hold, with budget $n-s$, when $G(C^{-1})$ is a known spider.
Theorem~\ref{thm:spider} applied to $C^{-1}$ reads only the diagonal of $C^{-1}$ and
its entries on the edges of the spider, and these, together with $\ldet C$, cost
$O(n)$ time to compute from the entries of $C$ in the same positions, by the
standard formulas for a tree-structured precision matrix~\cite[Eqs.~(6.58)--(6.59)
and Section~8.2.1]{Lauritzen2026}:
\begin{align*}
  C^{-1}&=\sum_{\{u,v\}}\bigl(C[\{u,v\},\{u,v\}]\bigr)^{-1}
  -\mathrm{Diag}\Bigl(\frac{\deg(v)-1}{C_{vv}}\Bigr)_{v\in\Nn}\,,\\
  \ldet C&=\sum_{\{u,v\}}\ldet C[\{u,v\},\{u,v\}]-\sum_{v\in\Nn}(\deg(v)-1)\log C_{vv}\,,
\end{align*}
where the sums run over the edges $\{u,v\}$ of the spider, each $2\times2$ inverse
is padded with zeros to order $n$, and $\deg(v)$ is the degree of $v$ in the
spider. If the spider is not known, then recognizing it, and computing the same
quantities, costs $O(n^2)$ time (\ref{app:spider}).

\paragraph{Stars and hardness}
A \emph{star} is a spider in which every leg has order one, so that $r=n-1$ and $C$
is an arrowhead matrix~\cite{ALTHANI2023120}. Writing $d_i$ for the diagonal entry of
leaf~$i$, we have $w_i(1)=\beta_i^2/d_i$\,, and \eqref{eq:body} reduces to
\[
  \ldet C[\{1\}\cup J,\{1\}\cup J]=\sum_{i\in J}\log d_i
  +\log\Bigl(c_1-\sum_{i\in J}\beta_i^2/d_i\Bigr)
\]
for every set $J$ of leaves. The case $1\notin S$ is trivial on a star (we choose the
$s$ leaves with the greatest $d_i$), so all of the difficulty lies in the scalar
coupling. That difficulty is genuine. Ohsaka~\cite[Theorem~3.1]{Ohsaka} proved
that ``Determinant Maximization on arrowhead matrices is NP-hard and W[1]-hard
when parameterized by $k$,'' where the input is a rational positive-semidefinite
matrix and an integer $k$, the task is to find a $k\times k$ principal submatrix of
maximum determinant~\cite[Problem~2.1]{Ohsaka}, and an arrowhead matrix is one
whose support graph is a star. Because $\log$ is increasing, this problem is MESP,
and $k$ is our $s$. Although the theorem is stated for positive-semidefinite input,
the matrices that its reduction produces are positive definite. They are Gram
matrices of vectors $u_0,\dots,u_m\in\mathbb{Q}^{2m}$, of order $m+1$, such that $u_0$
is positive on coordinates $1,\dots,m$ and zero elsewhere, and, for $1\le i\le m$,
$u_i$ is positive on coordinates $i$ and $m+i$ and zero elsewhere: the vectors of
\cite[Eq.~(3.3)]{Ohsaka} have this sign pattern, because the parameters there and
the numbers $x_j$ are positive \cite[Problem~3.3]{Ohsaka}, and the rational vectors
of \cite[Lemma~3.7]{Ohsaka} approximate them entrywise within a factor $1\pm\varepsilon$
with $\varepsilon<1$, so they keep it. Coordinate $m+i$ is nonzero only in $u_i$\,, so
a linear combination of $u_0,\dots,u_m$ that vanishes has a zero coefficient on each
$u_i$ with $i\ge1$, and then on $u_0$\,, because $u_0\ne0$. Hence, these vectors are
linearly independent, their Gram matrix is positive definite, and MESP remains NP-hard, and W[1]-hard with respect to $s$, on
nonsingular $C$ for which $G(C)$ is a star; see also \cite{FampaLeeRecent2026}. We
draw two consequences.
\begin{enumerate}
\item[(i)] An $n$-vertex star is a spider with $r=n-1$ legs. Hence, unless
$\mathrm{P}=\mathrm{NP}$, if an algorithm for MESP on spiders runs in time at most
$n^{\omega(r)}$ on every spider with $r$ legs, for some function $\omega$, then $\omega$ is
unbounded; because $s\le n$, the same holds for bounds of the form $n^{\omega_0}s^{\omega(r)}$
with $\omega_0$ constant. For fixed $r$, Theorem~\ref{thm:spider} gives $O(ns^2+s^{r+1})$
time, whose exponent grows with $r$; some such growth is therefore unavoidable.
\item[(ii)] On a star, $|\mathcal{E}|=\sum_{k=0}^{s-1}\binom{n-1}{k}=n^{O(s)}$, so
Theorem~\ref{thm:spider} runs in time $n^{O(s)}$. By the W[1]-hardness result, we
cannot expect a running time of the form $f(s)\,n^{O(1)}$ on stars, unless the
parameterized complexity classes FPT and W[1] coincide.
\end{enumerate}
The greedy algorithm of \cite[Section~3]{ALTHANI2023120} for
arrowhead matrices applies precisely to the body-containing case above, and their
sufficient condition identifies instances on which the scalar coupling does not
disturb the greedy order.

\section{Closing remarks}\label{sec:closing}

\begin{remark}[All budgets at once]\label{rem:budgets}
The tables of Theorems~\ref{thm:main} and~\ref{thm:precision} contain the whole
vector $(z(C,t))_{t\le s}$\,. This vector is exactly the input that we need in order
to combine independent blocks $C^{(1)},\dots,C^{(m)}$ by a knapsack-type recursion,
\[
  \Gamma_j(t)=\max_u\{\Gamma_{j-1}(t-u)+z(C^{(j)},u)\},
\]
with $\Gamma_0(0):=0$ and $\Gamma_0(t):=-\infty$ for $t>0$, where $u$ ranges over
$0\le u\le\min(t,n_j)$ and $n_j$ is the order of $C^{(j)}$; then
$\Gamma_m(t)=z(C,t)$. This recursion covers so-called ``$k$-tridiagonal matrices'' (see \cite{daFonsecaYilmaz2015}),
whose support graph consists of $k$ disjoint paths. We use the same recursion for
spiders in Section~\ref{sec:spiders}.
\end{remark}

\begin{remark}[Beyond tridiagonal matrices]
Lemma~\ref{lem:pieces} is the reason that it suffices to consider $O(n)$ distinct
``last pieces'' per prefix. For $(2k+1)$-diagonal matrices with $k\ge2$, pieces
separated by fewer than $k$ unchosen indices interact, and the number of possible
pieces grows exponentially. Whether MESP is polynomially solvable for fixed
bandwidth $k\ge2$ appears to be wide open.
\end{remark}

\medskip
\noindent {\bf Acknowledgments.} 
M.~Fampa was supported in part by grant 304340/2026-0 from CNPq.
J.~Lee was supported in part by grant FA9550-22-1-0172 from AFOSR.

\medskip
\noindent {\bf CRediT authorship contribution statement.}
Both authors contributed equally in all aspects of this work.

\medskip
\noindent {\bf Data availability.}
No data was used for the research described in the article.

\medskip
\noindent {\bf Declaration of generative AI and AI-assisted technologies in the manuscript preparation process.}
During the preparation of this work, the authors used Claude to assist in developing and checking the work. 
The authors take full responsibility for the content of the published article.

\appendix

\section{Recognizing an unknown spider}\label{app:spider}
\setcounter{theorem}{0}
\renewcommand{\thetheorem}{\Alph{section}.\arabic{theorem}}

Here, $C$ is positive definite and given as a dense matrix, $Y$ is a Gaussian
random vector with covariance matrix $C$, and we write
$\rho_{ij}:=C_{ij}/\sqrt{C_{ii}C_{jj}}$ for $i\ne j$.

\begin{proposition}\label{prop:unknown}
In $O(n^2)$ time, we can decide whether $G(C^{-1})$ is a tree and, if it is, find
the tree, the diagonal of $C^{-1}$ and its entries on the edges of the tree, and
$\ldet C$. In particular, within the same time, we can decide whether $G(C^{-1})$ is
a spider.
\end{proposition}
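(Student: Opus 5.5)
Since $C^{-1}$ is positive definite, a tree-structured $G(C^{-1})$ makes $Y$ a Gaussian tree model. We therefore plan to follow the Chow--Liu idea: compute the maximum-weight spanning tree of the complete graph with weights $\rho_{ij}^2$, then verify that $C^{-1}$ is supported on that tree.

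\emph{Structure.} First we compute the connected components of $G(C)$ in $O(n^2)$ time. $C^{-1}$ is block diagonal over these components, and a forest is acceptable within each block; whether $G(C^{-1})$ is a tree can be read off at the end, since a tree requires a single component. Suppose $G(C^{-1})$ restricted to a component is a tree $T$. By the global Markov property, exactly as in Lemma~\ref{lem:ou}, $\rho_{ij}=\prod_{\{u,v\}\in P(i,j)}\rho_{uv}$ along the $T$-path $P(i,j)$. Each edge factor has $0<|\rho_{uv}|<1$: it is nonzero because otherwise the product formula would make $C$, and hence $C^{-1}$, split across that edge, and it is less than $1$ by positive definiteness. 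Hence $d(i,j):=-\log\rho_{ij}^2$ is a tree metric with positive edge lengths, and every nonedge $\{i,j\}$ has $\rho_{ij}^2<\rho_{uv}^2$ for each edge $\{u,v\}$ on $P(i,j)$. By the cycle property, $T$ is then the unique maximum spanning tree for the weights $\rho_{ij}^2$. We compute this tree with Prim's algorithm on the dense weight matrix in $O(n^2)$ time. As in the model of computation, we compare $\rho_{ij}^2=C_{ij}^2/(C_{ii}C_{jj})$ so that no square roots are needed.

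\emph{Verification and formulas.} Given the candidate tree $T$, we test whether $C$ is Markov with respect to $T$. We root $T$ and, as in Proposition~\ref{prop:recognize}(ii), check for every pair $i,j$ with $j$ not adjacent to $i$ the identity $C_{ij}C_{kk}=C_{ik}C_{kj}$, where $k$ is the neighbour of $j$ on the path toward $i$. This is $O(1)$ per pair once we have parent pointers for each root $i$. Running a BFS from each $i$ gives all these neighbours in $O(n)$ per root, so $O(n^2)$ in total. If every identity holds, induction on path length recovers $C$ from its diagonal and its tree entries.

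The converse construction generalizes the proof of Proposition~\ref{prop:recognize}. We set $X_{\mathrm{child}}=\beta X_{\mathrm{parent}}+\varepsilon$ with $\beta:=C_{uv}/C_{uu}$ and innovation variance $C_{vv}-C_{uv}^2/C_{uu}>0$. This gives $C=\hat C$ with $C^{-1}=E^{\top}\mathrm{Diag}(v)^{-1}E$, which is supported on $T$, and $\det C=\prod v$. The diagonal of $C^{-1}$, its edge entries, and $\ldet C$ then follow in $O(n)$ time from the tree formulas quoted before the appendix. If some identity fails, we reject: by the uniqueness argument above, any tree support would have to be $T$, and a tree-supported $C^{-1}$ forces the Markov identities.

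\emph{Spider test.} Deciding whether $T$ is a spider only requires checking that the degree sequence has exactly one vertex of degree at least $3$, which is $O(n)$.

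The main obstacle is the uniqueness claim: that the true support must coincide with the maximum spanning tree, including in the presence of ties and near-zero correlations. We will handle it with the strict inequality $\rho_{ij}^2<\min_{P(i,j)}\rho_{uv}^2$, which holds because every factor has absolute value strictly below $1$ and the path $P(i,j)$ between nonadjacent vertices has length at least $2$. This strictness is what makes the maximum spanning tree unique.
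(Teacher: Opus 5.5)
Your proposal is correct. Its first half is the paper's argument: the product formula along tree paths makes $-\log\rho_{ij}^2$ a tree metric with positive edge lengths, so the true support is the unique maximum-weight spanning tree, found by Prim's algorithm (your weights $\rho_{ij}^2$ and the paper's $|\rho_{ij}|$ are interchangeable).

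You depart from the paper in two places.
\begin{itemize}
\item Instead of rejecting as soon as some $\rho_{ij}=0$, you split $C$ along the components of $G(C)$. This costs nothing and also recognizes forest-structured $C^{-1}$.
\item More substantively, you certify the tree differently. The paper builds $Q_T$ from the tree formulas and checks $Q_TC=I$ with one sparse-times-dense product. Soundness is then immediate, and completeness rests on those formulas being exact for a tree-structured precision matrix. You instead check the $O(n^2)$ identities $C_{ij}C_{kk}=C_{ik}C_{kj}$, finding $k$ by a BFS from every vertex. You then prove these identities suffice with a structural-equation model on the rooted tree, extending Proposition~\ref{prop:recognize} from paths to trees. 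This is self-contained: it yields $C^{-1}=E^{\top}\mathrm{Diag}(v)^{-1}E$ and $\det C=\prod_k v_k$ without the cited formulas. The price is the $n$ traversals and one more induction.
\end{itemize}

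That induction should be stated. The conclusion $\hat C=C$ needs $\hat C$ to satisfy the same identities. It does, by the Markov property, because $\hat C^{-1}=E^{\top}\mathrm{Diag}(v)^{-1}E$ is supported on $T$ by direct computation.

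One further detail should be made explicit. To report that $G(C^{-1})$ \emph{is} $T$, rather than a subgraph of $T$, you need the entry $-C_{ab}/(C_{aa}C_{bb}-C_{ab}^2)$ to be nonzero on every edge $\{a,b\}$ of $T$. This holds because, within a connected component of $G(C)$, every maximum spanning tree uses only edges with $\rho_{ab}^2>0$. A zero-weight tree edge could be swapped for a positive-weight edge across the same cut, strictly increasing the weight. The paper obtains the same fact directly from its step~(1).
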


\begin{proof}
We proceed in three steps.
\begin{enumerate}
\item[(1)] If $\rho_{ij}=0$ for some $i\ne j$, then we reject. Otherwise, by
Prim's algorithm for dense graphs, we compute in $O(n^2)$ time a spanning tree $T$
of the complete graph on $\Nn$ that has maximum weight for the weights
$|\rho_{ij}|$; the algorithm only compares these weights, each available in $O(1)$
time.
\item[(2)] We compute $Q_T$ and $\Lambda_T$ from the formulas that follow
Theorem~\ref{thm:spider}, with $T$ in place of the spider: $Q_T$ is supported on the
diagonal and the edges of $T$, and the computation costs $O(n)$ time.
\item[(3)] We check whether $Q_TC=I$. Row $i$ of $Q_T$ has $\deg_T(i)+1$ nonzero
entries, so the product costs $O\bigl(n\sum_i(\deg_T(i)+1)\bigr)=O(n^2)$ time. If the
check fails, then we reject; otherwise, we report $T$, $Q_T$\,, and
$\Lambda_T$\,. Deciding whether $T$ is a spider then costs $O(n)$ time.
\end{enumerate}

\emph{Soundness.} If the check in (3) passes, then $C^{-1}=Q_T$\,. The entry of
$Q_T$ on an edge $\{u,v\}$ of $T$ comes only from the $2\times2$ inverse for that
edge, so it equals $-C_{uv}/(C_{uu}C_{vv}-C_{uv}^2)$, which is nonzero by (1).
Hence, $G(C^{-1})=T$, and $\Lambda_T=\ldet C$ by the second formula.

\emph{Completeness.} Suppose that $G(C^{-1})$ is a tree $T^*$. As in the soundness
argument, the formulas for $T^*$ give
$(C^{-1})_{uv}=-C_{uv}/(C_{uu}C_{vv}-C_{uv}^2)$ for every edge $\{u,v\}$ of $T^*$, so
$\rho_{uv}\ne0$; also, $|\rho_{uv}|<1$ because $C$ is positive definite. Let
$i=k_0,k_1,\dots,k_m=j$ be the path in $T^*$ from $i$ to $j$. Because $k_1$ separates
$i$ from $j$ in $T^*$, the global Markov property of Gaussian graphical
models~\cite[Corollary~6.4, Proposition~6.6, and Theorem~4.6]{Lauritzen2026} gives
that $Y_i$ and $Y_j$ are conditionally independent
given $Y_{k_1}$\,, so $\rho_{ij}=\rho_{ik_1}\rho_{k_1j}$\,; by induction on $m$,
$\rho_{ij}=\prod_{l=1}^{m}\rho_{k_{l-1}k_l}$\,. Hence, no $\rho_{ij}$ is zero, and (1)
does not reject. Moreover, $d(i,j):=-\log\rho_{ij}^2$ is a tree metric on $T^*$ with
positive edge lengths: if $\{i,j\}$ is not an edge of $T^*$, then $m\ge2$, and
$d(i,j)=\sum_{l=1}^{m}d(k_{l-1},k_l)$ exceeds $d(k_{l-1},k_l)$ for every $l$. So every
edge outside $T^*$ is strictly the lightest, in the weights $|\rho_{ij}|$, on the
cycle that it closes with $T^*$, and $T^*$ is the unique maximum-weight spanning
tree. Hence, $T=T^*$ in (1), the formulas in (2) give $Q_T=C^{-1}$ and
$\Lambda_T=\ldet C$, and the check in (3) passes.
\end{proof}

The tree $T$ in (1) is the tree of~\cite{ChowLiu1968} (see
also~\cite[Corollary~8.4]{Lauritzen2026}), because the
mutual information $-\frac{1}{2}\log(1-\rho_{ij}^2)$ of $Y_i$ and $Y_j$ increases
with $|\rho_{ij}|$. When $G(C^{-1})$ is a path, this procedure gives another way to
find the order in Remark~\ref{rem:perm}. No algorithm can run in $o(n^2)$ time
in the worst case, because an entry of $C$ that it does not read could be changed,
by an amount small enough to keep $C$ positive definite, so that $G(C^{-1})$ is no
longer a tree.

\bibliographystyle{elsarticle-num}
\bibliography{MESP_update}

\end{document}